\documentclass[sigconf]{acmart}

\setcopyright{none}
\renewcommand\footnotetextcopyrightpermission[1]{}

\usepackage{fancyhdr}
\setcopyright{none}
\renewcommand\footnotetextcopyrightpermission[1]{}

\usepackage{graphicx} 
\usepackage{algorithm}
\usepackage{algpseudocode}
\usepackage{amsthm}
\usepackage{todonotes}
\usepackage{bm}

\newtheorem{lemma}{Lemma}
\newtheorem{theorem}{Theorem}

\usepackage{comment}

\usepackage{xcolor}
\usepackage{xspace}
\usepackage{tikz}
\usetikzlibrary{arrows.meta,positioning,calc,shapes.geometric, shapes, snakes}

\usepackage{makecell}
\usepackage{cleveref}

\newcommand{\mhp}{
  \textsc{Multihop}\,-\!
  \raisebox{-0.5ex}{\includegraphics[height=2.9ex]{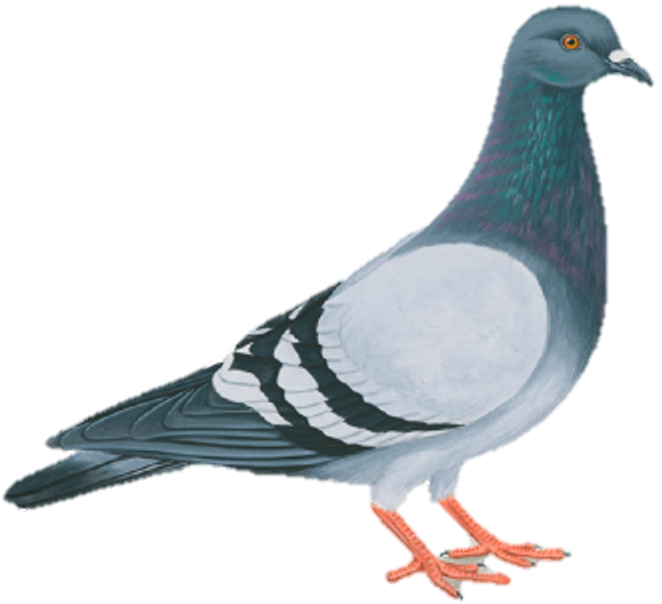}}\xspace}

\newcommand{\thp}{\textsc{2-hop}\,-\!
  \raisebox{-0.5ex}{\includegraphics[height=2.9ex]{Pigeon.png}}\xspace}
\newcommand{\shp}{\textsc{Singlehop}\,-\!
  \raisebox{-0.5ex}{\includegraphics[height=2.9ex]{Pigeon.png}}\xspace}
\newcommand{\pigeononedge}[3][0.55]{%
  \path (#2) -- (#3) coordinate[pos=#1] (pigeonpos);
  \path let \p1=(#2), \p2=(#3) in
    node[
      at={(pigeonpos)}
    ] {\includegraphics[width=6mm]{pigeon.png}};
}

\newcommand{\specialstar}{%
  \tikz{\node[star,star point ratio=0.5,rotate=180, color=yellow, fill=yellow,draw,inner sep=3pt] at(0,-0.1) {};}\xspace}
\usepackage{balance}

\begin{document}

\title[Algorithms for Carrier Pigeons]{The Carrier Pigeon Internet Protocol:\\An Algorithmic 
(and Lighthearted) Perspective}

\author{Matthias Bentert}
\affiliation{%
  \institution{TU Berlin}
  \city{Berlin}
  \country{Germany}
}
\email{matthias@bentert.de}

\author{Shay Kutten}
\affiliation{%
  \institution{Technion}
  \city{Haifa}
  \country{Israel}
}
\email{kutten@technion.ac.il}

\author{Darya Melnyk}
\affiliation{%
  \institution{TU Berlin}
  \city{Berlin}
  \country{Germany}
}
\email{melnyk@tu-berlin.de}

\author{Tijana Milentijevi\'c}
\affiliation{%
  \institution{TU Berlin}
  \city{Berlin}
  \country{Germany}
}
\email{tijana.milentijevic@tu-berlin.de}

\author{Stefan Schmid}
\affiliation{%
  \institution{TU Berlin}
  \city{Berlin}
  \country{Germany}
}
\email{stefan.schmid@tu-berlin.de}

\begin{abstract}
	The theoretical model behind the pigeon post as a link layer in a communication network was introduced by Shannon (under the guise of studying One-Time Pads for cryptography). That is, to send a one-hop message to $v$, a node $u$ needs a mail pigeon bred and raised at $v$. When sending a message using a pigeon to $v$, node $u$ loses the pigeon. To send another message to $v$, node $u$ needs another pigeon of $v$.
It has been demonstrated that the communication bandwidth achievable with pigeon post can exceed that of networks using other media. This has already motivated the introduction of Internet standards that allow the use of pigeons as Internet link-layer media.

In this paper, we begin to fill in the missing piece: designing algorithms for breeding and scheduling pigeons to meet a given communication demand efficiently, minimizing the number of pigeons required.
We consider singlehop, 2-hop, and multihop pigeon use. While the singlehop variant admits a simple characterization, both the 2-hop and the multihop variants are NP-hard. For the latter variants, we present a polynomial-time algorithm based on demand aggregation that achieves a 2-approximation for the number of pigeons used.
We believe that this pigeon-based perspective offers both amusing and instructive insights into network design and hopefully, into ornithology.
\end{abstract}

\begin{CCSXML}
<ccs2012>
   <concept>
       <concept_id>10003033.10003068.10003073.10003076</concept_id>
       <concept_desc>Networks~Traffic engineering algorithms</concept_desc>
       <concept_significance>500</concept_significance>
       </concept>
   <concept>
       <concept_id>10003752.10003809</concept_id>
       <concept_desc>Theory of computation~Design and analysis of algorithms</concept_desc>
       <concept_significance>300</concept_significance>
       </concept>
 </ccs2012>
\end{CCSXML}

\ccsdesc[500]{Networks~Traffic engineering algorithms}
\ccsdesc[300]{Theory of computation~Design and analysis of algorithms}

\keywords{Internet protocols, routing, non-terrestrial networks, carrier pigeon service}

\maketitle

\section{Introduction}\label{sec:intro}
The AI revolution places heavy demands on communication bandwidth, which is often the bottleneck in AI tasks such as training models; see, e.g., \cite{narayanan2021efficient,huang2019gpipe}.  
In the quest to increase bandwidth, researchers turn to the time-tested communication method of the pigeon post, but adapt it for modern communication. See IP over Avian Carriers (IPoAC)  \cite{rfc1149,rfc6249} and the Carrier Pigeon Internet Protocol (CPIP) \cite{cpip}. A motivation for such link layer standards can be the demonstration that the bandwidth of communication using pigeons could be much higher than that of competing media\footnote{Provided that one does not communicate at night and that one does not mind the shit \cite{pigeon-experiment}.}, since a pigeon can carry a chip containing a substantial amount of information by using the PEI protocol (Pigeon Enabled Internet) in the TCP framework (Transmission by Carrier Pigeons) \cite{pigeon-experiment}. This was demonstrated again in \cite{bbc, mail-gardian} and in \cite{BBC2,australia}. Pigeons have also been used in other layers of the Internet, see, e.g. \cite{PigeonRank,wiki:google}.

Environmental aspects also support the use of pigeon post. Fiber optics, the current leading alternative, requires substantial amounts of silicon, commonly in the form of sand. This comes at a time when sand in various parts of the world is disappearing, a development that worries environmentalists, economists, and governments  \cite{cao2022material,salle2022global,lamb2023constructing}. The UN is also worried \cite{peduzzi2022sand}.
So, next time you are sitting on the beach in the Riviera,
or Cancun, or Thailand, or the Maldives, remember that the sand is in danger. If you want to keep having such vacations, supporting Pigeon Post as opposed to fiber optics is the way to go.

In the long term, pigeon post also has the potential to help increase agricultural yields.
Using more pigeons generates more guano \cite{pigeon-experiment}.
This guano is a good fertilizer~\cite{singh2019nutrient,masin2024composting}, which suggests that using pigeon post may help grow more food. This, in turn, may allow one to free up land for ecological restoration, thus restoring forests and wetlands, and increasing biodiversity. \footnote{On a serious note, whatever one may think of pigeon post, the environmental crisis~\cite{unep25} and the hunger crisis~\cite{ifad2017state} are prevalent; let us hope that this paper will make at least a small contribution to increasing awareness.}

The theoretical basis for Pigeon Post was laid in Shannon's seminal work~\cite{shannon1949communication}. Shannon, however, did not mention pigeons by name. Instead, he referred to his version of Vernam's encryption protocol~\cite{vernam1926cipher}. This is known today as a ``one-time pad'' and appears in many popular spy-related books in the fiction literature, e.g.~\cite{havana, crypto}.\footnote{A reader who reads these two books as a result of our paper has already gained a lot from the paper. {The authors do not have any financial interest in the books or the publishers.}} The spy has some text (or a bit string) given to her by her country's secret service. Let us call this text a ``pigeon''. Using this pigeon, she encrypts a message to generate a ``ciphertext'' to be read by the secret service (often, the encryption is just an XOR operation between each bit of the message and the corresponding bit of the pigeon). The service reads the message by applying the reverse operation (often just XORing the ciphertext with the corresponding bit of the same pigeon). Shannon analyzed the version in which both sides then discard the pigeon. Hence, if the spy initially has $x$ pigeons to send messages to the secret service (each such pigeon could not be sent to anyone but that secret service), then after the message is sent, the spy has only $x-1$ pigeons remaining. 

As is sometimes regrettably the case, the ancient Egyptians, Persians, Greeks, Romans, etc., did not wait for Shannon's theoretical foundations but instead proceeded with the practice of using pigeons to send messages.\footnote{Modern physics does not rule out the possibility of time travel, so it may be the case that those early adopters did rely on Shannon's work after all \cite{Luminet_2021}.} See \cite{wiki:post}.

\paragraph*{Our Contributions} 

We view this paper as an ``Introduction to Pigeon Post''. We 
first formally define how network communication can be implemented using mail pigeons (also known as carrier or homing pigeon). We thereby assume that the mail pigeons can be bred in a demand-aware manner, i.e., the breeders are aware of the future communication demands. Optimizing these breeding locations is critical as later in their lives, mail pigeons can only fly in one direction: \emph{home} (their birthplace). 
The objective is to find a pigeon traffic pattern that uses as few pigeons as possible (model details will follow).

We first analyze a scenario where messages have to be sent \emph{directly} to their receivers, by a single pigeon (singlehop), and present a simple optimal algorithm. We then generalize the model and allow forwarding of messages using two or more homes; these homes can serve as intermediate nodes where messages are relayed to other pigeons (multihop). We show that the general problem is NP-hard, even if the number of homes for forwarding is unlimited. 

We further present a 2-approximation algorithm for a scenario where we are allowed to use one intermediate node, that is, two pigeons (2-hop). We also present elegant Integer Linear Programs for the NP-hard problems we discuss. Our results are summarized in Table~\ref{tab:results}.

\begin{table}[t] 
    \centering
    \resizebox{\columnwidth}{!}{%
    \begin{tabular}{lccc}
    \hline
    Model & Approximation & Number of pigeons & Runtime \\ \hline
    \shp 
    & \makecell[c]{optimal \\ \textcolor{gray}{Algorithm~\ref{alg:singlehop-trivial}}}
    & \makecell[c]{$\Theta(n^2)$ \\ \textcolor{gray}{Theorem~\ref{thm:singlehop}}}
    & \makecell[c]{$O(|S|+|D|)$ \\ \textcolor{gray}{Theorem~\ref{thm:singlehop}}} \\ \hline

    \thp 
    & \makecell[c]{$2$-approx. \\ \textcolor{gray}{Algorithm~\ref{alg:twohop-coordinator}}}
    & \makecell[c]{$O(|S|+|D|)$ \\ \textcolor{gray}{Theorem~\ref{thm:twohop}}}
    & \makecell[c]{$O(n^2)$\\ \textcolor{gray}{Theorem~\ref{thm:twohop}}} \\ \hline

    \thp{} ILP 
    & \makecell[c]{optimal}
    & \makecell[c]{optimal}
    & \makecell[c]{exponential \\ \textcolor{gray}{Theorem \ref{thm:ILPtp}}} \\ \hline

    \mhp 
    & \makecell[c]{$2$-approx. \\ \textcolor{gray}{Algorithm~\ref{alg:twohop-coordinator}}}
    & \makecell[c]{$O(|S|+|D|)$\\ \textcolor{gray}{Theorem~\ref{thm:twohop}}}
    & \makecell[c]{$O(n^2)$ \\ \textcolor{gray}{Theorem~\ref{thm:twohop}}} \\ \hline

    \mhp{} ILP 
    & \makecell[c]{optimal}
    & \makecell[c]{optimal}
    & \makecell[c]{exponential \\ \textcolor{gray}{Theorem \ref{thm:ILPmp}}} \\ \hline
    \end{tabular}
    }
    \caption{Summary of results.}
    \label{tab:results}
\end{table}

\paragraph*{Further Related Work} 

The design of demand-aware structures such as codes (e.g., Huffman codes~\cite{huffman2007method}), data structures (e.g., biased binary search trees~\cite{demaine2007dynamic,sleator1985self,bent1985biased}), and networks (e.g., splay nets~\cite{schmid2015splaynet,avin2020demand}) is an evergreen topic in algorithm theory. 
Recently, demand-aware networks have received particular interest in the context of reconfigurable datacenter networks, whose topology can be optimized to match the traffic workload~\cite{avin2025revolutionizing,farrington2010helios,ghobadi2016projector}, see, for example, Google's Jupiter datacenter~\cite{poutievski2022jupiter}. Demand-aware networks are attractive as communication traffic is known to exhibit substantial structure~\cite{avin2020complexity}, which can be exploited for optimization.

Many optimization problems in communication rely on demand matrices, including traffic engineering~\cite{fortz2000internet}, and the early works date back to the beginnings of the Internet~\cite{Kleinrock1976}. There is also interesting research on estimating the amount of communication to be sent from each node $i$ to each node $j$ \cite{MedinaTSBD02}. Matrices for a directed graph (such as the graph we use here) are addressed, e.g., in \cite{HajiaghayiKLR05}. For even older versions, see e.g., \cite{Hitchcock1941} (the famous transportation problem) and \cite{BeckmannMcGuireWinsten1956}.

The problem is also connected to multi-commodity flow problems \cite{greening2025strengthening} as well as graph layout problems \cite{even1975complexity,diaz2002survey}: the design of a demand-aware graph of degree 2 corresponds to the minimum linear arrangement problem~\cite{avin2020demand}. These problems are NP-hard in many scenarios, also on directed networks \cite{even1975complexity} (like our demand graph). Vehicle routing problem~\cite{toth2002vehicle} relates to our problem as well. However, different variants of this problem, including the multi-depot vehicle routing~\cite{crevier2007multi}, to the best of our knowledge, do not consider the relaying of goods between vehicles, which is essential for our problem.

More remotely, our storage model is inspired by the Pigeon-Hole Principle (see, e.g., \cite{ajtai1994complexity}): that is, if there are more pigeons than holes in a pigeon house, then some holes will simply need to accommodate multiple pigeons.

\section{Model}

In our model, communication demands may arise between different locations. We represent these locations by a set of nodes $V$ with $|V| = n$ and the communication demand as a directed and unweighted \emph{demand graph} $G_D = (V, E_D)$. There is a directed edge $(i,j) \in V \times V$ with $i \neq j$ in $G_D$, if there is a communication demand from node $i$ to node $j$. We assume that the demand graph is stored in an adjacency matrix $M_D$, which we refer to as the demand matrix. Note that $M_D$ is not necessarily symmetric and satisfies $M_D(i,i) = 0$ for all $i\in V$. Nodes with outgoing demand are referred to as sources $s \in S$, and nodes with incoming demand as destinations $d\in D$; note that a node may play both roles.

The communication infrastructure is induced by pigeons and can be viewed as a dynamic directed multigraph, called \emph{infrastructure graph}. Let $P$ denote a set of pigeons (at a given time). Each pigeon $p\in P$ is associated with two nodes: a home node $h(p) \in V$, where the pigeon was bred and (only) to which it can fly, and a remote node $r(p) \in V$, where the pigeon is initially placed. As we will see, our problems concern defining home and remote nodes so that a minimal number of pigeons can serve a given communication demand.
Upon release from its remote location, pigeon $p$ flies directly from $r(p)$ to $h(p)$ without intermediate stops.
A pigeon thus induces a directed edge $(r(p), h(p))$ in the infrastructure graph. 
When $p$ flies, the edge is deleted.
Multiple pigeons may induce parallel edges between the same pair of nodes, each used at a different time, to ensure that transportation demand is satisfied.

The communication demand does not need to be carried directly from its source to the destination by a single pigeon. Instead, demand may be routed along directed paths in the infrastructure graph. A unit of demand originating at node $i$ and destined for node $j$ may traverse a sequence of intermediate nodes. At intermediate nodes, demand can be gathered from arriving pigeons and split to the corresponding departing pigeons. Similarly, demand from multiple arriving pigeons can be batched together and sent using a single pigeon.


We distinguish between singlehop, 2-hop, and multihop uses of pigeons. In the \shp{} problem variant, each unit of demand must be transported directly from its source to its destination by a single pigeon, without using intermediate nodes. Consequently, no aggregation or forwarding of demand is allowed, and a pigeon flying from node $i$ to node $j$ can only carry demand destined for $j$ that originates at $i$. In the \thp{} problem, the demand from node $i$ to node $j$ may either be transported directly or routed via a single intermediate node $l$, resulting in a path $i \to l \to j$ in the infrastructure graph. 

The \mhp{} problem is a generalization of the \thp, in which communication demand may be forwarded through multiple intermediate nodes via multiple pigeons sequentially.  

In this paper, we assume that pigeons can carry arbitrary amounts of information. When a pigeon $p$ arrives at its home node $h(p)$, all the demand it carries is delivered to the home node. If $h(p)$ is not the final destination of the demand, the information can be stored at that node and forwarded at a later time by pigeon $p^\prime$, which has $h(p)$ as its remote node, i.e., $h(p)=r(p^\prime)$. 
In that case, $h(p)$ is an intermediate node for the demand.  
Demand may wait arbitrarily long at nodes before being forwarded further. Nodes have unlimited storage capacity and may aggregate incoming demand without restriction. This part of the model actually follows from the Pigeon-Hole Principle: that is, if there are more pigeons than holes at pigeon house $h(p)$, then some holes simply will need to accommodate multiple pigeons. Figure~\ref{fig:bipartite-to-solution-existing-nodes} illustrates a demand graph and a corresponding infrastructure graph induced by pigeons in a \mhp{} problem.

The demand is delivered successfully if there exists an assignment of all demands to pigeon flights over time such that, for every demand pair $(i,j) \in G_D$, the demand 
is routed from $i$ to $j$ along a directed path in the (dynamic) infrastructure graph. 
In general, for the three problems \shp, \thp{} and \mhp{}, the pigeon post operates in two conceptual phases. In the first, offline planning phase, the complete demand matrix $M_D$ is known, and our objective is to breed pigeons (i.e., define their home nodes) and place them in their remote nodes according to our network design decisions. In the second phase, the execution phase, pigeons fly to their home nodes according to the required schedule (i.e., wait for the predecessor pigeon in multihop routing with intermediate nodes). The demand is routed along the established infrastructure graph following a predefined schedule. 

Our objective is to construct an infrastructure graph that routes all demands from the demand matrix while minimizing the total number of pigeons. Besides the minimization problems, we also consider decision versions of these problems (e.g., when studying hardness), where we ask whether a given demand can be satisfied with at most $k$ pigeons.

\begin{figure}[t]
\centering
\resizebox{\columnwidth}{!}{%
\begin{tikzpicture}[
  >=Latex,
  node/.style={circle, draw, thick, minimum size=10mm, inner sep=0pt, font=\large},
  demand/.style={-Latex, thick, dashed},
  pigeon/.style={-Latex, very thick},
  flow/.style={-Latex, thick, opacity=0.55},
  lab/.style={font=\large, inner sep=1pt},
  box/.style={draw, rounded corners, thick}
]

\begin{scope}
  \coordinate (Lsw) at (-1.0,-3.2);
  \coordinate (Lne) at (5.2,2.2);
  \draw[box] (Lsw) rectangle (Lne);
  \node[lab] at (2.2,1.9) {Demand graph $G_D$};

  \node[node] (s1L) at (0,1.0) {$s_1$};
  \node[node] (s2L) at (0,-0.9) {$s_2$};
  \node[node] (s3L) at (0,-2.6) {$s_3$};

  \node[node] (t1L) at (4.5,1.0) {$d_1$};
  \node[node] (t2L) at (4.5,-0.9) {$d_2$};
  \node[node] (t3L) at (4.5,-2.6) {$d_3$};

  \draw[demand] (s1L) -- node[lab, above] {} (t1L);
  \draw[demand] (s1L) -- node[lab] {} (t2L);
  \draw[demand] (s1L) -- node[lab, below] {} (t3L);

  \draw[demand] (s2L) -- node[lab] {} (t2L);
  \draw[demand] (s2L) -- node[lab, below] {} (t3L);

  \draw[demand] (s3L) -- node[lab, above] {} (t1L);

\end{scope}

\begin{scope}[xshift=7.2cm]
  \coordinate (Rsw) at (-1.2,-3.2);
  \coordinate (Rne) at (5.2,2.2);
  \draw[box] (Rsw) rectangle (Rne);
  \node[lab] at (2.2,1.9) {Infrastructure graph $G_I$};

  \node[node] (s1) at (0,1.0) {$s_1$};
  \node[node] (s2) at (0,-0.9) {$s_2$};
  \node[node] (s3) at (0,-2.6) {$s_3$};

  \node[node] (t1) at (4.5,1.0) {$d_1$};
  \node[node] (t2) at (4.5,-0.9) {$d_2$};
  \node[node] (t3) at (4.5,-2.6) {$d_3$};

  \draw[pigeon] (s3) to[bend left=35] (s1);
  \draw[pigeon] (s1) -- (t1);
  \draw[pigeon] (s1) -- (t2);
  \draw[pigeon] (s1) -- (t3);

  \draw[pigeon]
  (s3) to[bend left=35]
  node[pos=0.55] {\includegraphics[width=6mm]{pigeon.png}}
  (s1);
  \draw[pigeon]
  (s2) to[bend left=0.1]
  node[pos=0.35] {\includegraphics[width=6mm]{pigeon.png}}
  (s1);
  \pigeononedge{s1}{t1}
  \pigeononedge{s1}{t2}
  \pigeononedge{s1}{t3}

\end{scope}

\end{tikzpicture}
}
\caption{On the left side a demand graph $G_D$ is depicted with 3 source and 3 destination nodes. The right side shows the corresponding infrastructure graph $G_I$	induced by pigeons, illustrating an optimal placement and routing solution that satisfies all the demands while minimizing the total number of pigeons used. In this example, a pigeon $p$ flying from $s_3$ to $s_1$ has $s_1$ as its birthplace, i.e. $h(p)=s_1$. The pigeon is brought to a remote node $s_3$, i.e. $r(p)=s_3$, and flies home directly to $s_1$. }
\label{fig:bipartite-to-solution-existing-nodes}
\end{figure}

\section{Theoretical Results}

In this section, we present theoretical results for the three problem variants: \shp, \thp{}, and \mhp{}. For all three problems, we present a simple lower bound on the number of pigeons needed: 
\begin{theorem}[Lower bound on the number of pigeons] \label{obs:min-pigeons}
    The minimal amount of pigeons needed to transfer the entire demand is $|P| \geq \max(|S|, |D|)$, where $P$ denotes the set of pigeons, $S$ and $D$ a set of demand sources and demand destinations, respectively. 
\end{theorem}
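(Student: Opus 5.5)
The plan is to prove the two inequalities $|P| \geq |S|$ and $|P| \geq |D|$ separately, each by an injection into the set of pigeons, and then take the maximum. The argument is meant to hold for \mhp{}, where it is weakest, and therefore also for \thp{} and \shp{}, since any feasible schedule for those variants is also a feasible \mhp{} schedule.

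For $|P| \geq |S|$, fix any feasible solution and any source $s \in S$. Take some demand $(s,j) \in E_D$ with $j \neq s$. The demand must travel along a nonempty directed path from $s$ to $j$ in the dynamic infrastructure graph. The first edge of that path is a pigeon $p_s$ with $r(p_s) = s$, because information can leave $s$ only by a pigeon released at $s$. Each pigeon has exactly one remote node, so $s \mapsto p_s$ is injective, which gives $|P| \geq |S|$.

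For $|P| \geq |D|$, use the same path but look at its last edge. For each destination $d \in D$, pick a demand $(i,d)$ with $i \neq d$. The routing path ends with a pigeon $q_d$ satisfying $h(q_d) = d$, since information arrives at $d$ only when a pigeon lands at its home $d$. Each pigeon has a unique home, so $d \mapsto q_d$ is injective, and $|P| \geq |D|$ follows.

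The only step needing care is justifying that the path is nonempty and begins or ends with a pigeon at the correct endpoint. This follows from the model: demand is transported only by pigeon flights, $M_D(i,i)=0$ excludes trivial demands, and a node may be both a source and a destination without affecting either injection, since the two counts are made separately. The argument does not bound $|P| \geq |S|+|D|$, because one pigeon can serve as $p_s$ for its remote node and as $q_d$ for its home node at the same time, which is why only the maximum is claimed.
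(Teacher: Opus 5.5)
Your proof is correct and follows the paper's argument. Every source must be the remote node of some pigeon and every destination the home of some pigeon, and since each pigeon has a single remote node and a single home, counting sources and destinations separately gives $|P|\ge\max(|S|,|D|)$. Your injection phrasing makes explicit the disjointness that the paper's ``summing over all remote nodes and home nodes'' leaves implicit.
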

\begin{proof}
    Observe that for each edge $(i,j)$ in the demand graph, there must be at least one pigeon starting in the remote node $i$. Similarly, there must be at least one pigeon arriving in its home $j$. The lower bound is achieved by summing over all remote nodes and home nodes, respectively, and choosing the maximum value. 
\end{proof}

\subsection{\shp{} Solution} 
We begin by considering the singlehop pigeons (\shp) problem, in which each unit of demand must be transported directly from its source to its destination by pigeons. In contrast to the multihop setting, intermediate nodes are not permitted, and demand cannot be aggregated or forwarded through other nodes. Consequently, each pigeon can only carry demand originating at its remote node and destined for its home node.

Algorithm \ref{alg:singlehop-trivial} shows a trivial solution in the \shp{} setting, in which each source node sends a pigeon directly to the corresponding destination. In particular, the problem reduces to selecting a pigeon for each edge in the demand graph. Observe that the optimal solution can be computed efficiently, as it only needs to consider all sources and destinations. In the worst case, however, all $n^2$ demand edges are present in the graph. These results are summarized in the following theorem: 

\begin{theorem}\label{thm:singlehop}
    Algorithm~\ref{alg:singlehop-trivial} computes the optimal amount of pigeons in $O(|S|+|D|)$ time. This algorithm uses $\Theta(n^2)$ pigeons in the worst case. 
\end{theorem}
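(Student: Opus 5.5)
The plan has three parts: show that Algorithm~\ref{alg:singlehop-trivial} is optimal, bound its running time, and show that it needs $\Theta(n^2)$ pigeons in the worst case.

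\textbf{Optimality.} In the \shp{} variant, a pigeon $p$ can only carry demand that originates at $r(p)$ and is destined for $h(p)$. So every demand edge $(i,j)\in E_D$ needs at least one pigeon with $r(p)=i$ and $h(p)=j$. Pigeons that serve different demand edges differ in at least one of $r(p)$ or $h(p)$, so no pigeon can serve two distinct demand edges. Hence every feasible solution uses at least $|E_D|$ pigeons. Since pigeons carry arbitrary amounts of information, one pigeon per edge is also enough. Algorithm~\ref{alg:singlehop-trivial} breeds exactly one pigeon with home $j$ and places it at remote node $i$ for each $(i,j)\in E_D$. It therefore uses exactly $|E_D|$ pigeons and is optimal. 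Feasibility is immediate: each pigeon flies once, directly, and there are no scheduling dependencies.

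\textbf{Running time.} This is the step I expect to be the main obstacle, because the algorithm must output $|E_D|$ pigeons, and $|E_D|$ can be as large as $|S|\cdot|D|$. A literal $O(|S|+|D|)$ bound therefore only makes sense if the output is counted per source or destination, or if edges are charged in an output-sensitive way. I would first read the claim as ``the algorithm makes one pass over each source and, for each source, over its destinations.'' Under that reading the work is linear in the size of the demand description restricted to $S$ and $D$, plus the output size. I would then present the bound as $O(|S|+|D|+|E_D|)$, noting that this collapses to the stated form when the demand lists are given as adjacency lists. Computing just the \emph{number} of pigeons is possible by summing precomputed out-degrees over $S$, which takes $O(|S|)$ time.

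\textbf{Worst case.} Take the complete demand graph, in which $M_D(i,j)=1$ for all $i\neq j$. Then $|E_D|=n(n-1)$, so by the optimality argument every \shp{} solution needs $n(n-1)=\Theta(n^2)$ pigeons. Conversely, $|E_D|\le n(n-1)$ always holds, which gives the matching $O(n^2)$ upper bound. Together these establish $\Theta(n^2)$.
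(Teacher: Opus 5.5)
Your optimality and worst-case arguments follow the paper's reasoning, which is given in the paragraph before the theorem: the problem reduces to choosing one pigeon per demand edge, and a complete demand graph needs $n(n-1)$ of them. Your observation that a pigeon serving $(i,j)$ must have $r(p)=i$ and $h(p)=j$ is correct. It implies that distinct demand edges need distinct pigeons, which is exactly the lower bound the paper leaves implicit.

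The running-time part is where you depart from the statement, and your suspicion is justified. The paper's only support is that the algorithm ``only needs to consider all sources and destinations.'' But Algorithm~\ref{alg:singlehop-trivial} as written loops over all of $V\times V$ on the matrix $M_D$, so it takes $\Theta(n^2)$ time. Moreover, any procedure that actually places the pigeons needs $\Omega(|E_D|)$ time just to output them. For the complete demand graph that is $n(n-1)$, against $|S|+|D|=2n$. So the literal $O(|S|+|D|)$ bound for constructing the solution cannot be proved. Your $O(|S|+|D|+|E_D|)$ bound is the honest replacement, for adjacency-list input and iterating over out-lists instead of $V\times V$.

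What is wrong is your remark that this ``collapses to the stated form'' for adjacency lists. The input representation does not change $|E_D|$, and the same complete-graph example shows that $|E_D|$ is not $O(|S|+|D|)$. Under the paper's own input model, the matrix $M_D$, even determining $S$ and $D$ takes $\Theta(n^2)$ time in the worst case.

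The only sense in which $O(|S|+|D|)$ holds is the one you mention last: computing the \emph{number} $|E_D|$ from precomputed out-degrees in $O(|S|)$. State explicitly that this, and not a run of Algorithm~\ref{alg:singlehop-trivial}, is the claim you are proving, rather than asserting that the two bounds coincide.
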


\begin{algorithm}[t]
\caption{\shp{} Algorithm}
\label{alg:singlehop-trivial}
\begin{algorithmic}[1]
\State $P \gets \emptyset$
\Statex \textit{Offline Planning Phase}
\ForAll{directed edges $(i,j) \in V \times V$ with $i \neq j$}
    \If{$(i,j)\in G_D$}
            \State breed pigeon $p$ at node $j$ 
            \State $r(p) \gets i$, $h(p) \gets j$ \Comment{set remote and home node}
            \State ship pigeon $p$ to node $i$
            \State $P \gets P \cup \{p\}$
    \EndIf
\EndFor
\Statex \textit{Flight Scheduling Phase}
\State let all pigeons fly simultaneously to their respective homes
\end{algorithmic}
\end{algorithm}

\subsection{\thp{} Solution} 

We now move beyond direct communication and study the use of intermediate nodes. Allowing demand to be forwarded through other nodes significantly increases the capabilities of the pigeon post by enabling mail aggregation before it is delivered to its final destinations. We first focus on the \thp{} problem, in which each demand is routed using at most two pigeon flights. The \thp{} problem can be viewed as a restricted form of the general \mhp{} problem, in which all mail delivery paths are required to have a length of at most two.

Algorithm~\ref{alg:twohop-coordinator} presents the \thp{} coordinator algorithm. The coordinator algorithm is based on gathering all the demand at one node, called the coordinator. Then, the coordinator spreads the demand and sends pigeons to the corresponding destinations. Note that we consider each weakly connected component separately in the algorithm, and pick the node with the largest degree in the connected component as the coordinator. Any demand in a connected component is then carried from one source to the coordinator (with one pigeon) and then from the coordinator to the destination (using a single pigeon for each destination). 

\begin{algorithm}[t]
\caption{\thp{} Coordinator Algorithm}
\label{alg:twohop-coordinator}
\begin{algorithmic}[1]
\State $P \gets \emptyset$
\State Execute the following algorithm for each connected component of $G_D$
\State coordinator $K \gets $ highest degree node of the current connected component
\Statex \textit{Offline Planning Phase}
\ForAll{nodes $i \in V$ with $i \neq K$}
    \If{$\sum_{j \in V} M_D[i,j] > 0$}
        \State breed pigeon $p$
        \State $r(p) \gets i$, $h(p) \gets K$ \Comment{gather all outgoing demand of $i$ at coordinator}
        \State ship pigeon $p$ to node $i$
        \State $P \gets P \cup \{p\}$
    \EndIf
\EndFor
\ForAll{nodes $j \in V$ with $j \neq c$}
    \If{$\sum_{i \in V} M_D[i,j] > 0$}
        \State breed pigeon $p$
        \State $r(p) \gets K$, $h(p) \gets j$ \Comment{send all demand destined for $j$ from coordinator}
        \State ship pigeon $p$ to node $K$
        \State $P \gets P \cup \{p\}$
    \EndIf
\EndFor
\Statex \textit{Flight Scheduling Phase}
\State let pigeons with a home in $K$ fly to $K$
\State let pigeons with a home not in $K$ fly from $K$ to their homes
\end{algorithmic}
\end{algorithm}

\begin{theorem}\label{thm:twohop}
    Let $c$ denote the number of weakly connected components in $G_D$, and ${C_1,\ldots,C_c}$ be the corresponding weakly connected components. We use $\Delta(C_i)$ to denote the maximum degree of the weakly connected component $C_i$. Algorithm~\ref{alg:twohop-coordinator} computes a solution with up to $|S| + |D| - \sum_{i\in[c]}\Delta(C_i)$ pigeons. It is thus a ($2-\sum_{i\in[c]}\Delta(C_i)/n$)-approximation of the optimal \thp{} solution. The algorithm runs in $O(n^2)$ time.
\end{theorem}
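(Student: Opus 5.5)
The plan has three parts: correctness of the schedule, the pigeon count, and then the approximation ratio and running time. I treat each weakly connected component $C_i$ separately, with coordinator $K_i$, the node of maximum degree $\Delta(C_i)$; totals are then sums over components. One step — the full saving of $\Delta(C_i)$ per component — may not hold as Algorithm~\ref{alg:twohop-coordinator} is literally written (see the third paragraph).

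\emph{Correctness.} Take a demand $(u,v)$ in $C_i$. If $u\neq K_i$, the pigeon with $r=u$, $h=K_i$ carries all of $u$'s outgoing mail to $K_i$ in the first flight round. If $v\neq K_i$, the pigeon with $r=K_i$, $h=v$ then carries everything destined for $v$ in the second round. The cases $u=K_i$ or $v=K_i$ need only one of these flights. Every path has length at most two, so this is a feasible \thp{} solution.

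\emph{Counting.} The algorithm breeds one pigeon per source other than $K_i$ and one per destination other than $K_i$. The claimed bound needs a saving of $\Delta(C_i)$ per component, so I would charge savings to the neighbours of $K_i$. Let $u$ be a neighbour joined by an edge $(u,K_i)$. I would try to show that $u$'s pigeon to $K_i$ settles both $u$'s contribution to $|S|$ and the part of $K_i$'s incoming demand counted in $|D|$, so that $u$ is counted once rather than twice. Symmetrically, for an edge $(K_i,w)$ the pigeon $K_i\to w$ would account for both $w\in D$ and $K_i$'s outgoing role. Summing over the $\Delta(C_i)$ incident edges should give the saving of $\Delta(C_i)$.

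\textbf{Main obstacle.} The direct node count $|S\setminus\{K_i\}|+|D\setminus\{K_i\}|$ only saves at most two pigeons per component. A neighbour $u$ that also sends to nodes other than $K_i$ still needs its own pigeon into $K_i$, and it is still counted once in $S$, so nothing extra is saved there. I would first verify the charging on small examples. If it fails, I would pin down the exact precondition under which the full saving holds, for instance that each neighbour of $K_i$ interacts only with $K_i$. If the algorithm itself must change, I would adjust it so that such neighbours reuse the pigeon into or out of $K_i$ and avoid a second pigeon. This is the step where I expect to have to either tighten the statement's hypotheses or modify the algorithm.

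\emph{Approximation ratio.} By Theorem~\ref{obs:min-pigeons}, $\mathrm{OPT}\ge\max(|S|,|D|)$, which is at least $(|S|+|D|)/2$. Therefore
\[
\frac{|S|+|D|-\sum_i\Delta(C_i)}{\max(|S|,|D|)}\le 2-\frac{\sum_i\Delta(C_i)}{\max(|S|,|D|)}\le 2-\frac{\sum_i\Delta(C_i)}{n},
\]
where the last step uses $\max(|S|,|D|)\le n$.

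\emph{Running time.} On the adjacency matrix $M_D$:
\begin{itemize}
\item finding the weakly connected components by BFS takes $O(n^2)$;
\item computing all degrees and choosing the coordinators takes $O(n^2)$;
\item the two loops with their row and column sums take $O(n^2)$.
\end{itemize}
The total is $O(n^2)$.
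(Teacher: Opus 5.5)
You have identified the right step, but no cleverer charging scheme, extra hypothesis of the kind you suggest, or modified algorithm can close it. The bound $|S|+|D|-\sum_i\Delta(C_i)$ is false, and not only for Algorithm~\ref{alg:twohop-coordinator}: no feasible solution can meet it.

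\emph{Counterexample to the count.} Take the out-star $K\to d_1,\dots,d_m$ with $m\ge 2$. It is a single component with $\Delta=m$, so the claimed bound is $1+m-m=1$. Theorem~\ref{obs:min-pigeons}, however, forces at least $|D|=m$ pigeons. This instance even satisfies your proposed precondition that every neighbour of $K$ interacts only with $K$.

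\emph{The paper's proof has the same hole.} It argues that $K_i$ has maximum degree, hence is a source and/or a destination, and then asserts a saving of $\Delta(C_i)$. But $K_i$ contributes only one unit to $|S|$ and one to $|D|$, however many edges it has. So that reasoning yields a saving of only $\rho_i\in\{1,2\}$ per component with an edge, where $\rho_i$ counts whether $K_i\in S$ and whether $K_i\in D$. This is exactly what your per-component count $|S\setminus\{K_i\}|+|D\setminus\{K_i\}|$ shows.

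\emph{What can be proved.} One gets $\mathrm{ALG}=|S|+|D|-\sum_i\rho_i$. Your ratio chain is sound: it divides by $\max(|S|,|D|)$ and then uses $\max(|S|,|D|)\le n$. Applied to this corrected count, it gives a $(2-\sum_i\rho_i/n)$-approximation, in particular a $2$-approximation. The paper's cycle example shows this is tight.

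\emph{The stated ratio is also false.} The ratio $2-\sum_i\Delta(C_i)/n$ fails on its own, not only through its derivation. Take the demand edges $v_1\to v_j$ for $j=2,\dots,5$ together with the path $v_2\to v_3\to v_4\to v_5\to v_1$. The unique coordinator is $v_1$, with degree $5$ (or $4$ in the underlying simple graph). Every node is both a source and a destination, so the algorithm uses $8$ pigeons. Yet six flights serve every demand within two hops: first $v_1\to v_2$ and $v_1\to v_4$, then $v_2\to v_3$, $v_3\to v_4$, $v_4\to v_5$, $v_5\to v_1$. The ratio is therefore at least $4/3>6/5\ge 2-\Delta(C_1)/n$.

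Your correctness argument and your $O(n^2)$ running-time analysis are fine and match the paper's.
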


\begin{proof}
    In each weakly connected component, Algorithm~\ref{alg:twohop-coordinator} selects a coordinator node and routes the whole demand in two phases: first, all sources send their entire demand to the coordinator, and second, the coordinator sends the accumulated demand to the corresponding destinations. In each weakly connected component, the algorithm correctly serves the demand. 
    
    This solution requires at most one pigeon from each source in $S$ to the respective coordinator and at most one pigeon from the coordinator to each destination in $D$, resulting in a cost of at most $|S| + |D|$ pigeons. Since the coordinator is chosen to be the highest degree node in the weakly connected component, i.e., it is a source and/or a destination, the solution saves at least as many pigeons as $\Delta(C_i)$ for each weakly connected component $C_i, i\in[c]$. Let $ALG$ denote the total cost of the algorithm. Then, $ALG \leq |S| + |D| -\sum_{i\in[c]}\Delta(C_i)$.

    In Theorem~\ref{obs:min-pigeons}, we derived a lower bound of $\max(|S|, |D|)$ on the number of pigeons for an optimal solution $OPT$. The approximation ratio of Algorithm~\ref{alg:twohop-coordinator} can be computed as 
    \begin{align*}ALG \leq |S| + |D| - \sum_{i\in[c]}\Delta(C_i) &\le 2 \cdot \max(|S|, |D|) - \sum_{i\in[c]}\Delta(C_i) \\
    &\leq (2-\sum_{i\in[c]}\Delta(C_i)/n) \cdot OPT.
    \end{align*}
    Hence, the algorithm computes a $(2-\sum_{i\in[c]}\Delta(C_i)/n)$-approximation of the optimal solution.

    The computed approximation factor for Algorithm~\ref{alg:twohop-coordinator} is tight. Consider for example a cyclic demand instance on $n$ nodes, where each node has exactly one outgoing demand and exactly one incoming demand. There is only one connected component in the demand graph, and every node has degree $2$. In this case, the optimal solution requires $n$ pigeons, while the Algorithm~\ref{alg:twohop-coordinator} uses $2\cdot n-2$ pigeons, resulting in an approximation factor $2-2/n$.

    The runtime of the algorithm is dominated by the computation of the weakly connected components which can be done in $O(n^2)$ time. 
\end{proof}

In the following, we show that it is hard to compute an optimal solution for the \thp{} problem.

\begin{theorem}
\label{thm:red-2spanner-to-thp}
The decision variant of \thp{} is NP-hard. 
\end{theorem}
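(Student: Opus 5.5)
The plan is a polynomial-time reduction from a known NP-hard covering problem. Following the structure of the known hardness proofs for the minimum directed $2$-spanner problem (Kortsarz and Peleg), which themselves go through Set Cover, I would reduce from Set Cover, or from directed $2$-spanner directly. The difficulty specific to our model is that the infrastructure graph may use \emph{arbitrary} pigeon edges between any two nodes, not only edges of $G_D$. So the construction must make any "foreign" relay, and in particular the coordinator strategy of Algorithm~\ref{alg:twohop-coordinator}, strictly unprofitable. Only then does an optimal pigeon set correspond to a cheap cover.

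Concretely, given a universe $U$ and sets $S_1,\dots,S_m$, I would build a layered demand graph with three node layers.
\begin{itemize}
\item A source layer consisting of many copies $a_1,\dots,a_t$ of a source, with $t$ polynomially large.
\item A set layer with one node $s_j$ per set.
\item An element layer with one node $e_u$ per element, where copies of each $e_u$ are replicated enough that routing into them is expensive.
\end{itemize}
The demands are of two kinds: demands $a_i\to s_j$ for all $i,j$, and demands $a_i\to e_u$. The set nodes $s_j$ already need incoming pigeons by Theorem~\ref{obs:min-pigeons}. The intended solutions then have the following shape: pigeons $a_i\to s_j$ for every $i$ and $j$ serve the set demands, and pigeons $s_j\to e_u$ for the chosen sets route every $a_i\to e_u$ demand as $a_i\to s_j\to e_u$. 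The resulting cost should be a fixed baseline plus a term linear in the size of the set cover. This gives the easy direction: a set cover of size $k$ yields a pigeon solution of size at most $B+k\cdot(\text{gadget size})$. In that direction I would also check schedulability, namely that all first-hop pigeons fly before all second-hop pigeons. This holds because the layers form a DAG, so no pigeon is needed twice and no parallel copies are needed.

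The converse direction is the main obstacle. From any solution with at most $k'$ pigeons I must extract a cover of comparable size. The key lemma I would aim for is a normalization statement: any optimal solution can be transformed, without increasing the pigeon count, into one whose second hops toward element nodes depart only from set nodes $s_j$ with $u\in S_j$. The argument would be exchange-based. The source-to-set pigeons are forced, each $a_i$ having $m$ distinct set-node destinations, so relaying through any other hub $l$ costs one extra pigeon $a_i\to l$ per source, and $t$ such pigeons in total. Choosing $t$ larger than $|U|\cdot m$ makes this more expensive than routing every element directly through some legitimate set node. Likewise, direct pigeons $a_i\to e_u$ must be made expensive by the replication of element and source copies.

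Once this lemma is proved, the sets $S_j$ whose nodes send pigeons into the element layer form a cover, and the counting gives the required bound, which completes the reduction. The delicate part is tuning the replication parameters so that every alternative (direct pigeons, arbitrary hubs, hubs among element nodes) is dominated. That tuning is where I expect most of the work to go.
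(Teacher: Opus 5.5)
\paragraph{There is a genuine gap.} You correctly identify the obstacle: pigeons may fly between arbitrary nodes. Your construction, however, does not overcome it.

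\emph{The instance does not encode the set system.} The demand graph you describe has demands $a_i\to s_j$ for all $i,j$ and $a_i\to e_u$ for all $i,u$, plus replicated copies. It never refers to the membership relation $u\in S_j$. It depends only on $t$, $m$, $|U|$ and the replication factor, so its optimum cannot depend on which elements lie in which sets. In the Kortsarz--Peleg spanner reduction, membership is encoded by the edge set of the host graph. Here nothing makes a hop $s_j\to e_u$ with $u\notin S_j$ illegal or costlier, so your normalization lemma has no content.

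\emph{The exchange argument runs backwards.} Take $s_1$ as a hub. Let the pigeons $a_i\to s_1$ fly for all $i$, then the pigeons $s_1\to s_j$ for $j\neq 1$ and $s_1\to e$ for every element node $e$. This is a valid 2-hop solution with $t+m-1+N$ pigeons, where $N$ is the number of element nodes. Your intended solutions use $tm+N$. A hub replaces $t(m-1)$ direct source-to-set pigeons by $m-1$ pigeons. Enlarging $t$ therefore makes aggregation \emph{more} profitable, and the source-to-set pigeons are not forced. In fact, for $t\ge m+N-1$ this hub solution is optimal, independently of the sets. A source with only one outgoing pigeon forces that pigeon's target to send pigeons to all other destinations; otherwise every source emits at least two pigeons, giving at least $2t$.

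\emph{The cover size never enters the cost.} Even in a minimal intended solution, the number of set-to-element pigeons is exactly $N$, however many distinct sets are used. So the cost is not ``baseline plus a term linear in the cover size.''

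\paragraph{What is missing.} You need a gadget that forces specific direct pigeons through local demand structure, so that the budget leaves only a small, controlled slack. Replicating sources cannot do this.

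The paper reduces from 3SAT. For every forced edge $e=(a,b)$ (the clause-to-literal edges and $x_j\leftrightarrow\overline{x}_j$), it attaches $2n+4$ arms $\{u^e_{i,1},u^e_{i,2},u^e_{i,3}\}$. Each arm has demands $(u^e_{i,1},u^e_{i,2})$, $(u^e_{i,1},u^e_{i,3})$, $(u^e_{i,2},u^e_{i,3})$, $(u^e_{i,2},a)$, $(u^e_{i,3},a)$ and $(u^e_{i,3},b)$.
\begin{itemize}
\item An arm can be served by the three pigeons its private nodes must emit anyway, $u^e_{i,1}\to u^e_{i,2}\to u^e_{i,3}\to a$, only if a pigeon flies directly from $a$ to $b$.
\item Otherwise a charging argument gives each arm a score of at least $3.5$. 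Summed over the $2n+4$ arms, this exceeds the budget.
\item Once every forced edge carries a pigeon, the budget leaves exactly $n$ further pigeons. Each variable needs one leaving $x_j$ or $\overline{x}_j$ toward the special node, and this choice defines the truth assignment.
\item Each clause's demand to the special node must then be routed through a literal that carries such a pigeon, so every clause is satisfied.
\end{itemize}
This local forcing is what defeats the coordinator-style shortcuts that sink your construction.
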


\begin{proof}
We reduce from \textsc{3SAT}. 
An instance of \textsc{3SAT} consists of a Boolean formula $\varphi$ in conjunctive normal form, where each clause contains exactly three literals.

Given an instance $\varphi$ of \textsc{3SAT}, we construct an instance $(G_D,k)$ of \thp{} as follows. The demand graph $G_D=(V,E)$ consists of one node for each clause $C_1,\dots, C_m$; one node for each positive literal $x_1, \dots, x_n$ and negative literal $\overline{x}_1, \dots, \overline{x}_n$; one special node \specialstar and additional $(6n+12)(2n+3m)$ nodes, which will be explained later. We set~$k=12n^2+18nm+27n+39m$.

We place the demand edges as follows: from each clause $C_i$ to each literal $x_j$ or~$\overline{x}_j$ that appears in that clause; from each clause $C_i$ to the special node~\specialstar; from each literal to~\specialstar; from $x_j$ to $\overline{x}_j$ and from $\overline{x}_j$ to $x_j$ for each~$j\in[n]$. 
Demand edges from each clause $C_i$ to each literal $x_j$ or~$\overline{x}_j$ that appears in that clause and between literals are called \emph{forced edges}. In total, there are $2n+3m$ forced edges. For each forced edge~$e=(a,b)$, we add $(6n+12)$ additional nodes~$u_{i,j}^e$ with~$i \in [2n+4]$ and~$j \in [3]$.
We also add demand edges~$(u_{i,1}^e, u_{i,2}^e)$, $(u_{i,1}^e,u_{i,3}^e)$, $(u_{i,2}^e,u_{i,3}^e)$, $(u_{i,2}^e,a)$, $(u_{i,3}^e,a)$, and~$(u_{i,3}^e,b)$ for each forced edge~$e=(a,b)$ and each~$i \in [2n+4]$.
For each forced edge~$e$ and each~$i\in [2n+4]$, we call the set~$\{u_{i,1}^e,u_{i,2}^e,u_{i,e}^e\}$ an \emph{arm} of the forced edge gadget for~$e$.
An example of the construction is shown in Figure~\ref{fig:clause-gadget} and the forced edge gadget is illustrated in Figure~\ref{fig:forced-edge-gadget}. Since the construction can clearly be computed in polynomial time, it remains to show correctness.

\begin{figure}[t]
\centering
\resizebox{1.2\columnwidth}{!}{%
\begin{tikzpicture}[> = stealth]
\begin{scope}[rotate=-90]

  \node[circle,draw,inner sep=3pt,label=$C_1$] at(0,0) (C) {};
\node[circle,draw,inner sep=3pt,label=$C_2$] at(0,2) (C2) {};

\node[circle,draw,inner sep=3pt,label={[yshift=4pt, xshift=3pt]left:${x_5}$}] at(2,3.7) (x5) {};
\node[circle,draw,inner sep=3pt,label={[yshift=4pt, xshift=3pt]left:${x_4}$}] at(2,1.5) (x4) {};
  \node[circle,draw,inner sep=3pt,label={[yshift=4pt, xshift=3pt]left:${x_3}$}] at(2,-0.4) (x3) {};
  \node[circle,draw,inner sep=3pt, label={[yshift=4pt, xshift=3pt]left:${x_2}$}] at (2,-2.5) (x2) {};
  \node[circle,draw,inner sep=3pt,label={[yshift=4pt, xshift=3pt]left:$x_1$}] at(2,-4.6) (x1) {};

  \node[circle,draw,inner sep=3pt,label={[xshift=-3pt, yshift=4pt]right:$\overline{x}_5$}] at(2,4.6) (nx5) {};
  \node[circle,draw,inner sep=3pt,label={[xshift=-3pt, yshift=4pt]right:$\overline{x}_4$}] at(2,2.5) (nx4) {};
  \node[circle,draw,inner sep=3pt,label={[xshift=-3pt, yshift=4pt]right:$\overline{x}_3$}] at(2,0.4) (nx3) {};
  \node[circle,draw,inner sep=3pt,label={[xshift=-3pt, yshift=4pt]right:$\overline{x}_2$}] at(2,-1.5) (nx2) {};
  \node[circle,draw,inner sep=3pt,label={[xshift=-3pt, yshift=4pt]right:$\overline{x}_1$}] at(2,-3.6) (nx1) {};

  \node[circle,inner sep=-1pt] at(4,1) (S) {\specialstar};

  \draw[->, very thick] (C)to[bend right=10] (x1);
  \draw[->, very thick] (C) to[bend right=10]  (x2);
  \draw[->, very thick] (C) to[bend left=10] (nx3);
  \draw[->, very thick] (C2)to[bend right=10] (x3);
  \draw[->, very thick] (C2)to[bend right=10] (x4);
  \draw[->, very thick] (C2) to[bend left=10](x5);

  \draw[->] (C) to[out=100, in=-270, looseness=4.1] (S);
  \draw[->] (C2) to[out=80, in=-260, looseness=3.0] (S);

  \draw[->, very thick] (x1) to[bend right=20] (nx1);
  \draw[->, very thick] (nx1) to[bend right=20] (x1);
  \draw[->, very thick] (x2) to[bend right=20] (nx2);
  \draw[->, very thick] (nx2) to[bend right=20] (x2);
  \draw[->, very thick] (x3) to[bend right=20] (nx3);
  \draw[->, very thick] (nx3) to[bend right=20] (x3);
  \draw[->, very thick] (x4) to[bend right=20] (nx4);
  \draw[->, very thick] (nx4) to[bend right=20] (x4);
  \draw[->, very thick] (x5) to[bend right=20] (nx5);
  \draw[->, very thick] (nx5) to[bend right=20] (x5);
  
  \draw[->] (x1) to[bend right=10] (S);
  \draw[->] (nx1) to[bend right=10] (S);
  \draw[->] (x2) to[bend right=10] (S);
  \draw[->] (nx2) to[bend right=10] (S);
  \draw[->] (x3) to[bend right=10] (S);
  \draw[->] (nx3) to[bend right=10] (S);
  \draw[->] (x4) to[bend left=10] (S);
  \draw[->] (nx4) to[bend left=10] (S);
  \draw[->] (x5) to[bend left=10] (S);
  \draw[->] (nx5) to[bend left=10] (S);
  \label{fig-demand-clause}
 
  \end{scope}

  \begin{scope}[rotate=-90, xshift=5cm]
   \node[circle,draw,inner sep=3pt,label=$C_1$] at(0,0) (C) {};
\node[circle,draw,inner sep=3pt,label=$C_2$] at(0,2) (C2) {};

\node[circle,draw,inner sep=3pt,label={[yshift=4pt, xshift=3pt]left:${x_5}$}] at(2,3.7) (x5) {};
\node[circle,draw,inner sep=3pt,label={[yshift=4pt, xshift=3pt]left:${x_4}$}] at(2,1.5) (x4) {};
  \node[circle,draw,inner sep=3pt,label={[yshift=4pt, xshift=3pt]left:${x_3}$}] at(2,-0.4) (x3) {};
  \node[circle,draw,inner sep=3pt, label={[yshift=4pt, xshift=3pt]left:${x_2}$}] at (2,-2.5) (x2) {};
  \node[circle,draw,inner sep=3pt,label={[yshift=4pt, xshift=3pt]left:$x_1$}] at(2,-4.6) (x1) {};

  \node[circle,draw,inner sep=3pt,label={[xshift=-3pt, yshift=4pt]right:$\overline{x}_5$}] at(2,4.6) (nx5) {};
  \node[circle,draw,inner sep=3pt,label={[xshift=-3pt, yshift=4pt]right:$\overline{x}_4$}] at(2,2.5) (nx4) {};
  \node[circle,draw,inner sep=3pt,label={[xshift=-3pt, yshift=4pt]right:$\overline{x}_3$}] at(2,0.4) (nx3) {};
  \node[circle,draw,inner sep=3pt,label={[xshift=-3pt, yshift=4pt]right:$\overline{x}_2$}] at(2,-1.5) (nx2) {};
  \node[circle,draw,inner sep=3pt,label={[xshift=-3pt, yshift=4pt]right:$\overline{x}_1$}] at(2,-3.6) (nx1) {};

  \node[circle,inner sep=-1pt] at(4,1) (S) {\specialstar};

  \draw[->] (C)to[bend right=10] node[pos=0.5] {\includegraphics[width=4mm]{pigeon.png}} (x1);
  \draw[->] (C) to[bend right=10] node[pos=0.5] {\includegraphics[width=4mm]{pigeon.png}} (x2);
  \draw[->] (C) to[bend left=10] node[pos=0.5] {\includegraphics[width=4mm]{pigeon.png}} (nx3);
  \draw[->] (C2)to[bend right=10] node[pos=0.4] {\includegraphics[width=4mm]{pigeon.png}} (x3);
  \draw[->] (C2)to[bend right=10] node[pos=0.5] {\includegraphics[width=4mm]{pigeon.png}} (x4);
  \draw[->] (C2) to[bend left=10] node[pos=0.5] {\includegraphics[width=4mm]{pigeon.png}} (x5);

  \draw[->] (x1) to[bend right=20] node[pos=0.6] {\includegraphics[width=4mm]{pigeon.png}} (nx1);
  \draw[->] (nx1) to[bend right=20] node[pos=0.6] {\includegraphics[width=4mm]{pigeon.png}} (x1);
  \draw[->] (x2) to[bend right=20] node[pos=0.6] {\includegraphics[width=4mm]{pigeon.png}} (nx2);
  \draw[->] (nx2) to[bend right=20] node[pos=0.6] {\includegraphics[width=4mm]{pigeon.png}} (x2);
  \draw[->] (x3) to[bend right=20] node[pos=0.6] {\includegraphics[width=4mm]{pigeon.png}} (nx3);
  \draw[->] (nx3) to[bend right=20] node[pos=0.6] {\includegraphics[width=4mm]{pigeon.png}} (x3);
  \draw[->] (x4) to[bend right=20] node[pos=0.6] {\includegraphics[width=4mm]{pigeon.png}} (nx4);
  \draw[->] (nx4) to[bend right=20] node[pos=0.6] {\includegraphics[width=4mm]{pigeon.png}} (x4);
  \draw[->] (x5) to[bend right=20] node[pos=0.6] {\includegraphics[width=4mm]{pigeon.png}} (nx5);
  \draw[->] (nx5) to[bend right=20] node[pos=0.6] {\includegraphics[width=4mm]{pigeon.png}} (x5);
  
  \draw[->] (x1) to[bend right=10] node[pos=0.5] {\includegraphics[width=4mm]{pigeon.png}} (S);
  \draw[->] (nx2) to[bend right=10] node[pos=0.5] {\includegraphics[width=4mm]{pigeon.png}} (S);
  \draw[->] (x3) to[bend right=10] node[pos=0.5] {\includegraphics[width=4mm]{pigeon.png}} (S);
  \draw[->] (nx4) to[bend left=10] node[pos=0.5] {\includegraphics[width=4mm]{pigeon.png}} (S);
  \draw[->] (nx5) to[bend left=10] node[pos=0.5] {\includegraphics[width=4mm]{pigeon.png}} (S);
  \label{fig-infrastructure-clause}
  \end{scope}

\end{tikzpicture}
}
\caption{The constructed demand graph for $\varphi = (x_1 \vee \overline{x}_3 \vee x_2) \wedge (x_3\vee x_4\vee x_5)$ without forced edge gadgets is illustrated above. Demand edges are added from $C_1$ and $C_2$ to their three literal nodes and to the star, from $x_j$ to $\overline{x}_j$ and from~$\overline{x}_j$ to~$x_j$ for all $j\in [n]$, and from each literal ($x_j$ or~$\overline{x}_j$) to the star. Forced edges are depicted with thick lines. Below is the corresponding infrastructure graph for the assignment $(x_1, x_2, x_3, x_4, x_5)=(\textsc{true}, \textsc{false}, \textsc{true}, \textsc{false}, \textsc{false})$.}
\label{fig:clause-gadget}
\end{figure}

($\Rightarrow$) Assume that~$\varphi$ is a yes-instance, that is, there exists a truth assignment to the variables of $\varphi$ such that all clauses are satisfied. Now, we build a solution for the constructed instance $(G_D,k)$ of \thp. We place the pigeons in the following way: three pigeons in each clause $C_i$ with home node literal $x_j$ that appears in that clause (total $3m$ pigeons); one pigeon in literal $x_j$ with home in $\overline{x}_j$ and one pigeon in $\overline{x}_j$ with home in $x_j$ for all $j\in [n]$ (total $2n$ pigeons); one pigeon in $x_j$ or $\overline{x}_j$ with home node \specialstar based on the truth assignment in $\varphi$ for all $j\in [n]$ (total $n$ pigeons). 
Additionally, for each forced edge $e=(a,b)$ and each~$i \in [2n+4]$, we add a pigeon in $u_{i,1}^e$ with home node $u_{i,2}^e$; one pigeon in $u_{i,2}^e$ with home node $u_{i,3}^e$; one pigeon in $u_{i,3}^e$ with home node $a$ (total $6n+12$ pigeons per forced edge).
This construction is shown in Figure \ref{fig:forced-edge-gadget}. Note that the total number of pigeons placed is~$3m + 2n + n + (2n+3m)(6n+12) = 12n^2+18mn+27n+39m= k$. 

\begin{figure}[t]
\centering
\resizebox{\columnwidth}{!}{%
\begin{tikzpicture}[> = stealth]
\begin{scope}
 \node[circle,draw,inner sep=3pt,label=above:$a$] at(0,-1) (a) {};
  \node[circle,draw,inner sep=3pt,label=below:$b$] at(0,-3) (b) {};

  \node[circle,draw,inner sep=3pt,label=above:$u_{1,1}^{e}$] at(-3,-1) (u1) {};
  \node[circle,draw,inner sep=3pt,label=above:$u_{1,2}^{e}$] at(-2,-1) (u2) {};
  \node[circle,draw,inner sep=3pt,label=above:$u_{1,3}^{e}$] at(-1,-1) (u3) {};
  \node[circle,draw,inner sep=3pt,label=below:$u_{2,1}^{e}$] at(-3,-2) (u4) {};
  \node[circle,draw,inner sep=3pt,label=below:$u_{2,2}^{e}$] at(-2,-2) (u5) {};
  \node[circle,draw,inner sep=3pt,label=below:$u_{2,3}^{e}$] at(-1,-2) (u6) {};
  \node at (-2,-3) {$\dots$}; 
 
  \draw[->] (u1) -- (u2);
  \draw[->] (u1) to[bend left=15] (u3);
  \draw[->] (u2) -- (u3);
  \draw[->] (u2) to[bend left=15] (a);
  \draw[->] (u3) -- (a);
  \draw[->] (u3) -- (b);

  \draw[->] (u4) -- (u5);
  \draw[->] (u4) to [bend right=15] (u6);
  \draw[->] (u5) -- (u6);
  \draw[->] (u5) -- (a);
  \draw[->] (u6) -- (a);
  \draw[->] (u6) -- (b);

  \draw[->, very thick] (a) -- (b);

  \end{scope}

  \begin{scope}[xshift=5.5cm] 
  \node[circle,draw,inner sep=3pt,label=above:$a$] at(0,-1) (a) {};
  \node[circle,draw,inner sep=3pt,label=below:$b$] at(0,-3) (b) {};
  
  \node[circle,draw,inner sep=3pt,label=above:$u_{1,1}^{e}$] at(-3,-1) (u1) {};
  \node[circle,draw,inner sep=3pt,label=above:$u_{1,2}^{e}$] at(-2,-1) (u2) {};
  \node[circle,draw,inner sep=3pt,label=above:$u_{1,3}^{e}$] at(-1,-1) (u3) {};
  \node[circle,draw,inner sep=3pt,label=below:$u_{2,1}^{e}$] at(-3,-2) (u4) {};
  \node[circle,draw,inner sep=3pt,label=below:$u_{2,2}^{e}$] at(-2,-2) (u5) {};
  \node[circle,draw,inner sep=3pt,label=below:$u_{2,3}^{e}$] at(-1,-2) (u6) {};
  \node at (-2,-3) {$\dots$}; 
 
  \draw[->] (u1) to node[pos=0.4] {\includegraphics[width=4mm]{pigeon.png}} (u2);
  \draw[->] (u2) to node[pos=0.4] {\includegraphics[width=4mm]{pigeon.png}} (u3);
  \draw[->] (u3) to node[pos=0.4] {\includegraphics[width=4mm]{pigeon.png}} (a);
  
  \draw[->] (u4) to node[pos=0.4] {\includegraphics[width=4mm]{pigeon.png}} (u5);
  \draw[->] (u5) to node[pos=0.4] {\includegraphics[width=4mm]{pigeon.png}} (u6);
  \draw[->] (u6) to node[pos=0.4] {\includegraphics[width=4mm]{pigeon.png}} (a);
  
  \draw[->, very thick] (a) to node[pos=0.45] {\includegraphics[width=4mm]{pigeon.png}} (b);

\end{scope}

\end{tikzpicture}
}
\caption{Two arms of a forced edge gadget in the demand graph are illustrated on the left. Edge~$e=(a,b)$ is forced and depicted with a thick line. On the right, a corresponding infrastructure graph that satisfies the forced edge gadget from the demand graph with the forced edge and 3 pigeons per arm is shown.}
\label{fig:forced-edge-gadget}
\end{figure}

We next discuss how to schedule the pigeons.
First, we let the pigeons fly from~$u_{i,1}^e$ to~$u_{i,2}^e$, then~$u_{i,2}^e$ to~$u_{i,3}^e$, and finally~$u_{i,3}^e$ to $a$ for each forced edge~$e=(a,b)$ and each~$i \in [2n+4]$. Next, pigeons fly from each clause $C_i$ to the literals appearing in that clause. Then, pigeons between the literals $x_j$ and $\overline{x}_j$ for all $j\in [n]$. Finally, for every variable $x_j$, a pigeon is released from $x_j$ to \specialstar if~$x_j$ is assigned true and a pigeon is released from~$\overline{x}_j$ to \specialstar if it is assigned false.

Now, we show that each demand is satisfied either directly or over two hops using an intermediate node in the constructed infrastructure graph. In particular, whenever a pigeon is placed on an edge $(u,v)$, the corresponding demand $(u,v)$ is satisfied by this direct flight.
The only demands that are not satisfied in this way are~$(u_{i,1}^e,u_{i,3}^e)$, $(u_{i,2}^e,a)$, $(u_{i,3}^e,b)$ for each forced edge~$e=(a,b)$ and each~$i \in [2n+4]$, $(C_i,\specialstar)$ for each clause~$C_i$, and for each variable~$x_j$ either~$(x_j,\specialstar)$ (if~$x_j$ is set to false) or~$(\overline{x}_j,\specialstar)$ (if~$x_j$ is set to true).
The demands within each forced edge gadget are satisfied over 2-hop paths with pigeons flying~$u_{i,1}^e \to u_{i,2}^e \to u_{i,3}^e$; $u_{i,2}^e \to u_{i,3}^e \to a$; and~$u_{i,3}^e \to a \to b$, respectively. 
The demand from each literal $x_j$ assigned false (or~$\overline{x}_j$ where~$x_j$ is set to true) to \specialstar is satisfied over a path $x_j \to \overline{x}_j \to \specialstar$ (or~$\overline{x}_j \to x_j \to \specialstar$). Finally, the demand $(C_i, \specialstar)$ is satisfied with a 2-hop flight $C_i \to x_j \to \specialstar$ where $x_j$ is a variable appearing in $C_i$ that is assigned true (or~$\overline{x}_j$ and~$x_j$ is set to false). Note that this variable must exist as otherwise $\varphi$ would not be satisfied, which contradicts our assumption.

($\Leftarrow$) Now, assume that~$(G_D,(11n+15m))$ is a yes-instance of \thp.
We will show that each forced edge gadget can be assigned a score of~$6n+13$ in any solution where each pigeon gives a total score of at most~$1$.
Hence, a total score of at most~$k - (2n+3m)(6n+13) = n$ can be unassigned and no forced edge gadget can be assigned more than~$7n+13$ as otherwise the solution contains more than~$k$ pigeons.
To define the score of a gadget, we distinguish between public and private nodes.
A node~$u_{i,j}^e$ for some forced edge~$e$ and any~$i,j$ is called a private node and all remaining nodes ($C_i,x_j,\overline{x}_j,\specialstar$) are called public.
First, for each private node~$v$, note that at least one pigeon has to leave from~$v$ since~$v$ is a source for some demand.
We arbitrarily assign one of the pigeons leaving~$v$ (a score of~$1$) to~$v$ and the score of any private node~$v = u_{i,j}^e$ is also assigned the forced edge gadget for~$e$.
Afterwards, all unassigned pigeons are assigned as follows: If the pigeon flies from~$u$ to~$v$ where both~$u$ and~$v$ are public nodes, then no node is assigned a score, but if~$(u,v)$ is a forced edge, then a score of one is added to the forced edge gadget for~$(u,v)$.
If exactly one of the nodes~$u$ and~$v$ is a public node and the other is a private node, then a score of one is assigned to the private node.
Finally, if both are public nodes, then a score of~$\frac{1}{2}$ is added to each of the two nodes.
Note that each pigeons gives a total score of at most one to all nodes and also a total score of at most one to each forced edge gadget.
Moreover, since each private node is assigned a score of at least one, each forced edge gadget is assigned at least a score of~$6n+12$.

Let~$e=(a,b)$ be a forced edge, we show that if no pigeon flies from~$a$ to~$b$ in the solution, then each arm of the forced edge gadget is assigned a total score of at least~$3.5$.
Since the number of arms is~$2n+4 \geq 2$, each forced edge gadget is thus assigned a score of at least~$6n+13$.
Moreover, if no pigeon flies from~$a$ to~$b$, then the forced edge gadget for~$e$ is assigned a score of at least~$3.5 (2n+4) = 7n+14$.
As argued above, this means that the number of pigeons in the solution is larger than~$k$, a contradiction.
So now assume towards a contradiction that any arm~$\{u_{i,1}^e,u_{i,2}^e,u_{i,3}^e\}$ is assigned a total score of less than~$3.5$ and no pigeon flies from~$a$ to~$b$.
Since each pigeon assigns a score of~$\frac{1}{2}$ or~$1$ to a node and each node is assigned a score of at least one, the three nodes in the arm are assigned a score of exactly three and each node in the arm is assigned a score of exactly one.
Note that this implies that each node has exactly one pigeon in the solution that leaves the respective node.
No pigeon flies from~$u_{i,3}^e$ to~$a$ as no pigeon flies from~$a$ to~$b$ in the solution and thus the demand~$(u_{i,3}^e,b)$ cannot be satisfied by a 2-hop path.
For the same reason, no pigeon flies from~$u_{i,2}^e$ to~$u_{i,3}^e$ as the demand~$(u_{i,2}^e,a)$ could not be satisfied and no pigeon flies from~$u_{i,1}^e$ to~$u_{i,2}^e$ as the demand~$(u_{i,1}^e,u_{i,3}^e)$ could not be satisfied.
Let~$c$ be the node such that a pigeon flies from~$u_{i,1}^e$ to~$c$ in the solution.
We make a case distinction whether~$c=u_{i,3}^e$ or not.
If~$c \neq u_{i,3}^e$, then note that one pigeon has to fly from~$c$ to~$u_{i,2}^e$ and one pigeon has to fly from~$c$ to~$u_{i,3}^e$. At most one of these pigeons is fully assigned to~$c$ and the other adds a score of at least~$\frac{1}{2}$ to the three nodes in the arm, raising the total score to at least~$3.5$.
So assume that~$c=u_{i,3}^e$.
To satisfy the demand~$(u_{i,1}^e,u_{i,2}^e)$, one pigeon has to fly from~$u_{i,3}^e$ to~$u_{i,2}^e$.
Then, a pigeon has to fly from~$u_{i,2}^e$ to both~$a$ and~$b$, contradicting that exactly one pigeon leaves each of the three nodes in the arm.
This shows that if no pigeon flies from~$a$ to~$b$, then the total score of each arm is at least~$3.5$ and the number of pigeons in the solution is larger than~$k$.
Finally, note that since we may assume that a pigeon flies from~$a$ to~$b$ for each forced edge~$e=(a,b)$, it holds that the score assigned to the forced edge gadget for~$e$ is at least~$6n+13$ and each pigeon that flies from a public node to a private node in the gadget increases the score by one.

To conclude the proof, consider the set of all pigeons in a solution that do not fly from~$a$ to~$b$ for some forced edge~$(a,b)$ and that do not start in a private node.
These are all but at least~$(2n+3m)(6n+13) = k-n$.
Hence, these are at most~$n$ pigeons.
Assume towards a contradiction that for some variable~$x_j$, none of these pigeons starts at~$x_j$ or~$\overline{x}_j$.
Then, the only pigeons leaving~$x_j$ and~$\overline{x}_j$ are the pigeons flying the forced edges~$(x_j,\overline{x}_j)$ and~$(\overline{x}_j,x_j)$.
Thus, the demands~$(x_j,\specialstar)$ and~$(\overline{x_j},\specialstar)$ are not satisfied, a contradiction.
Thus for each variable~$x_j$, exactly one of the~$n$ pigeons flies from~$x_j$ or from~$\overline{x}_j$.
We define a truth assignment by setting $x_j=\textsc{true}$ if $x_j$ is the node that the additional pigeon flies from and to $\textsc{false}$ otherwise.

It remains to show that the constructed assignment satisfies~$\varphi$.
To this end, consider any clause~$C_i$ and the demand~$(C_i,\specialstar)$.
As no pigeons are left to fly from~$C_i$ to~$\specialstar$ directly and the only pigeons flying from~$C_i$ are to the three nodes representing literals that appear in~$C_i$ (the three forced edges incident to~$C_i$), it must hold that a pigeon flies from one of these three nodes to~$\specialstar$.
These has to be one of the~$n$ pigeons that do not fly forced edges and do not start in a private node.
Hence, at least one of the three nodes is assigned an additional pigeon and by construction, the assignment of that variable satisfies~$C_i$.
Since all variables are satisfied in this way be the assignment, $\varphi$ is satisfied and the original instance of \textsc{3SAT} is a yes-instance.
This concludes the proof.

\end{proof}

In the following, we will present an ILP formulation of the \thp{} problem. To simplify the formulation, we make use of the following lemma:
\begin{lemma}
    \label{lem:numberhops}
    Any
    optimal solution for both \mhp{} and \thp{} requires at most $2(n-1)$ pigeons. 
\end{lemma}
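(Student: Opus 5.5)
The plan is to exhibit, for every demand graph $G_D$ on $n$ nodes, a feasible \thp{} solution with at most $2(n-1)$ pigeons. Since every \thp{} solution is also a valid \mhp{} solution (paths of length at most two are in particular directed paths in the dynamic infrastructure graph), the same construction bounds both optima. For this it suffices to bound the output of Algorithm~\ref{alg:twohop-coordinator}, whose correctness (all demand is delivered along paths of length at most two) was already established in the proof of Theorem~\ref{thm:twohop}.

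First, I count the pigeons the coordinator algorithm uses in a single weakly connected component $C$ with $n_C$ nodes and coordinator $K$. The algorithm breeds at most one pigeon from each node $i\neq K$ of $C$ with positive outgoing demand to $K$. It also breeds at most one pigeon from $K$ to each node $j\neq K$ of $C$ with positive incoming demand. Each non-coordinator node therefore accounts for at most two pigeons, which gives at most $2(n_C-1)$ pigeons for $C$.

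Second, I sum over the components $C_1,\dots,C_c$. Isolated nodes have no demand and contribute no pigeons; formally $n_C=1$ gives the bound $0$. The total is at most $\sum_{i\in[c]} 2(|C_i|-1) = 2(n-c) \le 2(n-1)$ whenever $c\ge 1$, and trivially $0$ pigeons suffice if $G_D$ has no edges. The optimum is at most this value, which proves the lemma.

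The only point needing care is the claim about the coordinator itself, namely that no pigeon is ever bred from $K$ to $K$ or for nodes outside the component. This rests on the conditions $i\neq K$ and $j\neq K$ in the algorithm; the pseudocode writes ``$j\neq c$'' in the second loop, which should be read as $j\neq K$. Demands with source or destination equal to $K$ are still served in one hop, since they are exactly the flights into or out of $K$, so the bound of two pigeons per non-coordinator node holds.
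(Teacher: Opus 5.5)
Your proof is correct, but it takes a different route from the paper. The paper does not use the coordinator algorithm. Instead it builds a demand-oblivious directed cycle $v_1\to v_2\to\cdots\to v_n\to v_1$, with two pigeons on each of the first $n-2$ arcs and one on each of the last two. It releases them as a single walk $v_1\to\cdots\to v_n\to v_1\to\cdots\to v_{n-1}$ of length $2n-2$, which realizes all-to-all communication.

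\emph{What the paper's cycle buys.} It is already a single walk, so it matches the normal form of Lemma~\ref{lem:walk} and the walk encoding used in the \textsc{Multihop} ILP.

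\emph{Where the cycle falls short.} A demand such as $(v_1,v_n)$ is routed over $n-1$ hops. So for $n\ge 4$ (e.g.\ with a complete demand graph) the cycle is not a feasible \textsc{2-hop} solution, and strictly speaking the paper's argument only proves the \textsc{Multihop} half of the lemma.

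\emph{What your route buys.} Your star construction keeps every route at length at most two. It therefore establishes the \textsc{2-hop} bound, which is exactly what Theorem~\ref{thm:ILPtp} needs to justify its $2n-2$ pigeon slots. The \textsc{Multihop} bound then follows because every \textsc{2-hop} solution is a \textsc{Multihop} solution, and you also get the slightly sharper $2(n-c)$.

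Nothing is lost downstream. Lemma~\ref{lem:walk} turns an optimal \textsc{Multihop} solution on a connected demand graph into a walk with the same number of pigeons. The walk-length bound used in the \textsc{Multihop} ILP therefore still follows from your bound on the optimum.

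Your reading of Algorithm~\ref{alg:twohop-coordinator} is the right one: the loops range over the current component, and ``$j\neq c$'' means $j\neq K$. You could even avoid the algorithm entirely. A single star centered at an arbitrary node, with inbound pigeons flying first and outbound pigeons second, gives all-to-all delivery in at most two hops with exactly $2(n-1)$ pigeons.
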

\begin{proof}
    We show this statement by presenting a solution that uses $2(n-1)$ pigeons for any demand graph. We build an infrastructure graph as a directed cycle, where the nodes along the cycle are following some arbitrary order. Assume WLOG that the nodes are ordered as $v_1,v_2,\ldots, v_n$ along the cycle. We place pigeons at the nodes as follows: two pigeons at each node $v_i, i\in[n-2]$ with a home at $v_{i+1}$, one pigeon at $v_{n-1}$ with a home at $v_n$, and one pigeon at $v_n$ with a home at $v_1$. We release the pigeons sequentially, one at a time, starting with node $v_1$. This pigeon carries the demand from $v_1$ to all other nodes in the graph. Once the pigeon arrives at $v_2$, the remaining demand from $v_1$ is forwarded with the next pigeon, together with the demand from $v_2$ to all other nodes. After making one cycle through all nodes, node $v_1$ will receive a pigeon with demands from nodes ${v_2,\ldots,v_n}$ with destinations in ${v_1,\ldots,v_{n-1}}$. Observe that at this point, the demand of node $v_1$ has been delivered to all nodes. However, the demand of node $v_n$ has only been delivered to node $v_1$ so far. Therefore, we need to continue forwarding the remaining demand through the cycle, up to node $v_{n-1}$. This construction always forwards all the demand with $2n-2$ pigeons.
\end{proof}
The following theorem 
establishes the properties of
an ILP we present for the \thp{} problem.
\begin{theorem}
    \label{thm:ILPtp}
    There is an ILP with~$O(n^4)$ variables and~$O(n^3)$ constraints for \thp, where all variables are binary.
\end{theorem}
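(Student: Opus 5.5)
We need an ILP with $O(n^4)$ binary variables and $O(n^3)$ constraints. The plan is to use Lemma~\ref{lem:numberhops} to bound multiplicities, and then encode which pigeons exist and which 2-hop paths use them, respecting timing.

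\textbf{Variables.} By Lemma~\ref{lem:numberhops}, an optimal solution uses at most $2(n-1)$ pigeons. So for each ordered pair $(u,v)$ we need at most $2n-2$ parallel copies, and each pigeon can be identified by a pair plus a copy index $t\in[2n-2]$. This is also a time slot: we interpret copy $t$ of pair $(u,v)$ as a pigeon flying at time step $t$.

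The binary variables are as follows:
\begin{itemize}
\item $x_{u,v,t}$ for $u\neq v$ and $t\in[2n-2]$, meaning a pigeon flies from $u$ to $v$ at time $t$. There are $O(n^3)$ of these.
\item $y_{i,l,j,t,t'}$, meaning the demand $(i,j)$ is routed via $l$ using the flight $(i,l)$ at time $t$ and the flight $(l,j)$ at time $t'>t$. Taken literally this gives $n^3\cdot n^2=O(n^5)$ variables, which is too many.
\end{itemize}
To reach $O(n^4)$, I would drop one time index. Use $z_{i,l,j,t}$, meaning demand $(i,j)$ leaves $i$ towards $l$ at time $t$, together with a "last possible departure" encoding. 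Concretely, let $w_{l,j,t}=1$ iff some pigeon from $l$ to $j$ flies at a time at least $t$. This $w$ is an $O(n^3)$ family defined by the monotonicity constraints
\[
w_{l,j,t}\le w_{l,j,t+1}+x_{l,j,t}.
\]
Then require $z_{i,l,j,t}\le x_{i,l,t}$ and $z_{i,l,j,t}\le w_{l,j,t+1}$. Direct delivery of $(i,j)$ is covered by $\sum_t x_{i,j,t}$.

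\textbf{Constraints.}
\begin{itemize}
\item Linking: the two constraints per $z$ variable, giving $O(n^4)$ constraints.
\item Demand coverage: for each demand edge $(i,j)$,
\[
\sum_t x_{i,j,t}+\sum_{l,t} z_{i,l,j,t}\ge 1,
\]
giving $O(n^2)$ constraints.
\item Definitional: the constraints on $w$, giving $O(n^3)$ constraints.
\end{itemize}
This gives $O(n^4)$ variables, but the linking constraints also number $O(n^4)$, so the constraint count must be reduced. I would aggregate them as
\[
\sum_j z_{i,l,j,t}\le n\,x_{i,l,t}\quad\text{for each }(i,l,t),
\]
which is $O(n^3)$ constraints. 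For the $w$-side I would instead introduce $O(n^4)$ variables $z_{i,l,j,t}$ in which $t$ is the time of the \emph{second} hop. The first hop is then certified by a variable $q_{i,l,t}=1$ iff some $(i,l)$ flight occurs before time $t$, which is again a monotone prefix family of size $O(n^3)$. Each $z$ then needs $z\le x_{l,j,t}$ and $z\le q_{i,l,t}$; both aggregate over one free index, respectively $i$ and $j$, into $O(n^3)$ big-$M$ style constraints.

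\textbf{Objective.} Minimize $\sum x_{u,v,t}$.

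\textbf{Correctness.} In one direction, any feasible solution gives a schedule: fly the pigeons in increasing order of $t$. In the other direction, any optimal pigeon solution can be sequentialized into distinct time steps, with at most $2n-2$ pigeons by Lemma~\ref{lem:numberhops}, so each pair receives at most $2n-2$ time indices.

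\textbf{Main obstacle.} The hardest part is keeping the constraint count at $O(n^3)$ while still correctly enforcing the ordering "the first hop precedes the second". The prefix variables $q$ and the aggregation over a free index are my first attempt. If the aggregation turns out to be too weak, I would instead note that ordering can be chosen per intermediate node: all arrivals at $l$ are scheduled before all departures from $l$ unless a cycle forces otherwise. I would then try a time-free formulation with variables $x_{u,v,c}$ and $y_{i,l,j}$ only, plus an ordering argument.
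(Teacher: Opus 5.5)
Your proposal is correct and has the same skeleton as the paper's proof. Flights are indexed by a time slot in $[2n-2]$, justified by Lemma~\ref{lem:numberhops}. (For the 2-hop variant this bound is really witnessed by the coordinator solution of Algorithm~\ref{alg:twohop-coordinator}, since the cycle in the lemma's proof uses long paths.) On top of that there are $O(n^4)$ routing variables, indexed by a demand, an intermediate node and one time slot, and linked to a first hop and a strictly later second hop. The paper indexes $y_{u,w,v}^i$ by the first-hop time and writes the later-second-hop condition inline as $y_{u,w,v}^i\le\sum_{j>i}x_{w,v}^j$, with no auxiliary prefix/suffix variables. It also keeps a one-pigeon-per-slot constraint.

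The genuine difference is your aggregation over the free index, and this is the part that actually earns the stated bound. The paper's constraints~(\ref{req:6}) and~(\ref{req:7}) are written per $(u,v)\in E$ and $i$, but $w$ is free in them. Taken literally they number $\Theta(n^4)$ in the worst case. Your aggregated versions number $O(n^3)$ and are equivalent, because all variables are binary, so the right-hand side is either $0$ or at least $n$.

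A few remarks on your write-up. Your switch to second-hop time is unnecessary: in your first version the suffix side aggregates over $i$ just as well, e.g.\ $\sum_i z_{i,l,j,t}\le n\sum_{s>t}x_{l,j,s}$, which also makes $w$ superfluous. If you keep prefix/suffix variables, you must impose the boundary condition ($q_{i,l,1}=0$, resp.\ $w_{l,j,2n-1}=0$); otherwise they can be set to $1$ with no supporting flight. Since you drop the one-pigeon-per-slot constraint, say in the soundness direction that ties within a slot can be broken arbitrarily, because no 2-hop route uses two flights of the same slot. Finally, the time-free fallback is not needed. It would have to encode acyclicity of the induced precedence relation, which is exactly what the time index gives you for free.
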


\begin{proof}
    Let~$(G=(V,E),k)$ be an instance of \thp.
    We construct an ILP with a binary variable~$x_{u,v}^i$ for each pair~$u,v \in V$ and each~$i \in [2n-2]$ and a binary variable~$y_{u,w,v}^{i}$ for each edge~$(u,v) \in E$, each node~$w \in V$, and each~$i \in [2n-2]$.
    The variable~$x_{u,v}^i$ is set to true if the~$i$\textsuperscript{th} pigeon flies from~$u$ to~$v$.
    Note that by Lemma~\ref{lem:numberhops}, we may assume that at most~$2n-2$ pigeons are required.
    The goal is to minimize the number of pigeons, that is,~$\sum_{u,v \in V} \sum_{i \in [2n-2]}x_{u,v}^i$, and a solution with~$k$ pigeons will exist if and only if the goal value is at most~$k$.
    The constraints are listed in Figure~\ref{fig:ILPtwo}.
\begin{figure}[t]
\centering
\begin{align}
\sum_{u,v \in V} x_{u,v}^i &\le 1 
&\hfill \forall i\in [2n-2] \label{req:4}\\
\sum_{i \in [2n-2]} (x_{u,v}^i + \sum_{w \in V} y_{u,w,v}^i)
&\ge 1 
&\hfill \forall (u,v)\in E \label{req:5}\\
y_{u,w,v}^{i} &\le x_{u,w}^i 
&\hfill \forall (u,v)\in E,\ i\in [2n-2] \label{req:6}\\
y_{u,w,v}^i 
&\le \sum_{\substack{j\in [2n-2]\\ j>i}} x_{w,v}^j
&\hfill \forall (u,v)\in E,\ i\in [2n-2] \label{req:7}
\end{align}
\caption{The constraints of the ILP for \thp.}
\label{fig:ILPtwo}
\end{figure}

    Since the number of variables is clearly in~$O(n^4)$ and the numbers of constraints is in~$O(n^3)$, it remains to show that the ILP is correct.
    The variable~$y_{u,w,v}^i$ will be set to true if and only if the demand~$(u,v)$ is routed via node~$w$ and the~$i$\textsuperscript{th} pigeon transports the message from~$u$ to~$w$.

    To show correctness, first assume that there is a solution with~$k$ pigeons.
    We may assume without loss of generality that no two pigeons fly at the same time, and hence we can order all pigeons in the order they fly.
    Then, we set~$x_{u,v}^{i} = 1$ if and only if the~$i$\textsuperscript{th} pigeon flies from~$u$ to~$v$.
    For each demand~$(u,v)$, if any pigeon flies from~$u$ to~$v$ directly, then we set all variables~$y_{u,w,v}^i=0$.
    Otherwise, there is at least one node~$w$ such that a pigeon~$i$ flies from~$u$ to~$w$ and a later pigeon~$j$ that flies from~$w$ to~$u$.
    We arbitrarily chose one such pair~$w,i$ and set~$y_{u,w,v}^i$ to true.
    Note that requirements~\ref{req:4}, \ref{req:5}, and~\ref{req:6} are now all satisfied by construction.
    Moreover, since we assume that for each chosen pair~$(w,i)$ a later pigeon~$j$ flies from~$w$ to~$v$, also constraint~\ref{req:7} is satisfied.

    In the other direction, assume that there is a solution to the ILP where the goal value is at most~$k$.
    We will let a pigeon fly from a node~$u$ to node~$v$ in time step~$i$ if and only if~$x_{u,v}^i=1$.
    Note that by construction at most~$k$ pigeons fly (each at some time step between~$1$ and~$2n-2$).
    It now only remains to show that each demand is satisfied by the constructed solution.
    So consider any demand~$(u,v)$.
    If some pigeon flies from~$u$ to~$v$ directly, then the demand is satisfied.
    Otherwise, $x_{u,v}^i = 0$ for all~$i \in [2n-2]$.
    By constraint~\ref{req:5}, at least one variable~$y_{u,w,v}^i$ is set to true.
    By constraints~\ref{req:6} and~\ref{req:7}, $x_{u,w}^i$ and~$x_{w,v}^j$ are set to~$1$ for at least one node~$w$ and one value~$j \in [2n-2]$ with~$j > i$.
    Thus, the demand~$(u,v)$ is satisfied by the pigeon flying from~$u$ to~$w$ at time~$i$ and the pigeon flying from~$w$ to~$v$ at time~$j$.
    Since the demand was chosen arbitrarily, all demands are satisfied in this way, and the ILP is therefore correct.
\end{proof}

\subsection{\mhp{} Solution}

Observe that Algorithm~\ref{alg:twohop-coordinator} is also a ($2-\sum_{i\in[c]}\Delta(C_i)$)-approximation for the \mhp{} problem. 
We next show that also the \mhp{} problem is NP-hard, and provide an ILP for this problem. 

The proof of NP-hardness is based on the following three lemmas: 

\begin{lemma}
    \label{lem:disconnect}
    If the demand graph~$G=(V,E)$ is weakly disconnected, then 
    each weakly connected component of~$G$ can be solved independently for \mhp{} and the minimum number of pigeons required for~$G$ is the sum of the minimum numbers of pigeons required for each connected component of~$G$.
\end{lemma}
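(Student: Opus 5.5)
The plan is to prove the two inequalities separately. Write $\mathrm{OPT}(H)$ for the minimum number of pigeons for a demand graph $H$, and let $C_1,\dots,C_c$ be the weakly connected components of $G$.

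\emph{Upper bound} $\mathrm{OPT}(G)\le\sum_{i}\mathrm{OPT}(C_i)$. Take an optimal \mhp{} solution for each $C_i$ that uses only nodes of $C_i$. I will argue below that this is without loss of generality. The node sets are disjoint, so I execute the schedules one after another: first all flights for $C_1$, then all for $C_2$, and so on. Each demand $(u,v)$ inside $C_i$ is still routed along the same time-respecting path. Nodes have unlimited storage, so interleaving or concatenating the schedules cannot break any path. This direction is routine.

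\emph{Lower bound} $\mathrm{OPT}(G)\ge\sum_i\mathrm{OPT}(C_i)$. Fix an optimal solution $P$ for $G$ with its flight order. For a component $C$, I project $P$ onto $C$. Pick an anchor $a\in V(C)$ and send every node outside $V(C)$ to $a$, keeping nodes of $C$ fixed. A pigeon $(x,y)$ becomes $(\pi(x),\pi(y))$. Self-loops are discarded, and the time order is kept. Every time-respecting path serving a demand of $C$ maps to a time-respecting walk inside $C$ with the same endpoints, once loops are removed. So the projection is feasible for $C$, and it uses at most the number of pigeons with at least one endpoint in $C$. This also justifies the assumption made in the upper-bound step.

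The main obstacle is that a pigeon between two different components $C_i$ and $C_j$ is counted in both projections. Summing the projection bounds naively therefore overcounts. My plan is to charge each pigeon to exactly one component. I would try first a smarter projection: when projecting onto $C$, map each outside node $x$ to the node of $C$ at which demand of $C$ last entered $x$, rather than to a fixed anchor. The aim is to discard a crossing pigeon $(x,y)$ with $x\in C_j$ and $y\in C_i$ from $C_j$'s projection whenever it carries no demand of $C_j$. Then each pigeon is charged to a component whose demand it actually carries.

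If a single pigeon carries demand of two components, I would use an exchange argument instead. I would show that an optimal solution can be modified, without increasing the number of pigeons, so that no pigeon carries demand of two different components. The idea is to split the infrastructure graph by the demand it carries, and to use that a crossing flight can be replaced by a flight internal to the component, namely one ending at the projected endpoint. I would prove this by induction on the number of crossing pigeons. Once every pigeon serves a single component, the projections use disjoint pigeon sets, which gives $|P|\ge\sum_i\mathrm{OPT}(C_i)$.
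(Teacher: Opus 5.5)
Your upper bound is correct, and so is your anchor projection: the projected schedule is feasible for $C$ and uses at most the pigeons with an endpoint in $V(C)$. The gap is the lower bound, which is the whole content of the lemma. You correctly see that a pigeon between two components is counted in both projections, but you then only list two things you would try, and neither is carried out.

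The first one fails as stated. The ``last entry'' map is time-dependent, and it loses information whenever an outside node collects demand of $C$ from two different nodes of $C$. Let $C$ have demands $(u_1,v)$ and $(u_2,v)$, and let pigeons $u_1\to x$, $u_2\to x$, $x\to v$ fly in this order, with $x\notin V(C)$. Under your rule $x$ is mapped first to $u_1$ and then to $u_2$. The first two pigeons become loops (at best the second becomes $(u_2,u_1)$), and the third becomes $(u_2,v)$. Nothing leaves $u_1$ anymore, so demand $(u_1,v)$ is lost.

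The second plan skips exactly the case that matters. If one pigeon carries demand of both $C_i$ and $C_j$, redirecting it to an endpoint inside one component breaks the routes of the other. To show the count does not grow, you would have to exhibit another pigeon that became redundant, and that is what your induction leaves unproved. The paper's own proof has the same shape: it asserts that a detour through another component can be shortcut. It is no more detailed on this point.

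A way to close the gap is to count on walks rather than on arbitrary schedules. The construction in the proof of Lemma~\ref{lem:walk} never uses connectivity of the demand graph. So any solution can be rearranged, with the same number of pigeons and no loss of reachability, into one walk per weakly connected component of the infrastructure graph. In such a walk, a demand $(u,v)$ is served exactly when $u$ occupies an earlier position than $v$. Every demand component lies inside a single walk, because each of its demands needs a path.

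Take a walk with $L$ flights, hence $L+1$ positions, containing the demand components $C_1,\dots,C_m$ with $m\ge 1$. Restrict it to the $N_t$ positions lying in $V(C_t)$ and merge consecutive repeats. The result is a feasible walk for $C_t$, so $\mathrm{OPT}(C_t)\le N_t-1$. Hence $\sum_t \mathrm{OPT}(C_t)\le \sum_t (N_t-1)\le L+1-m\le L$. Summing over all walks gives $\mathrm{OPT}(G)\ge\sum_i \mathrm{OPT}(C_i)$. The saving of one position per component is exactly what absorbs the crossing flights that your static projection counts twice.
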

\begin{proof}
  Consider a graph~$G=(V,E)$ with two weakly connected components, denoted $C_1$ and $C_2$. Assume by means of contradiction that there is an optimal solution where a pigeon $p$ flies from $C_1$ to $C_2$. Since $C_1$ and $C_2$ are weakly disconnected, the destination nodes in $C_1$ and $C_2$ are disjoint. Consider first the case where $p$ has not carried any demand, then the solution was not optimal because it wasted a pigeon. Thus, pigeon $p$ must have carried some demand. WLOG, assume that this demand was from sources in $C_1$, and thus has destinations in $C_1$. But then, there must exist at least one other pigeon that carries the demand back from $C_2$ to $C_1$. In this case, however, we could have saved at least one pigeon by avoiding a detour over the nodes in $C_2$. This is a contradiction to the fact that we had an optimal solution.
\end{proof}

\begin{lemma}
    \label{lem:walk}
    If the demand graph~$G=(V,E)$ is weakly connected, then there is always an optimal solution for \mhp{} in which only one pigeon flies at a time and the pigeon starting at time~$i$ starts at the same node that the pigeon flying at time~$i-1$ ended their flight for all~$i > 1$.
\end{lemma}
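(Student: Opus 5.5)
We prove Lemma~\ref{lem:walk} by taking an arbitrary optimal solution and transforming it, without increasing the number of pigeons, into one whose flights form a single walk. The transformation has two steps: sequentialize the flights, then glue them into one walk.

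\emph{Sequentialization.} If several pigeons fly simultaneously, we order them arbitrarily and let them fly one after the other. Nodes have unlimited storage and demand may wait arbitrarily long, so every information a pigeon carried in the original schedule is still available at its remote node when it departs in the sequential schedule. Every demand path therefore remains a valid time-respecting path, and we obtain an optimal solution with flights $f_1,\dots,f_k$ in this order, where $f_t$ goes from $r_t$ to $h_t$.

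\emph{Gluing into a walk.} We want $r_{t}=h_{t-1}$ for all $t>1$. The natural invariant is one ``token'' of information that accumulates everything seen so far. A pigeon leaving $v$ may carry \emph{all} information currently stored at $v$, since capacity is unlimited, so carrying extra information never hurts. The plan is an exchange argument on the underlying multigraph $H$ formed by the flights.
\begin{itemize}
\item First, argue that every pigeon is useful: a pigeon $p$ that carries no demand that is eventually delivered can be removed, contradicting optimality.
\item Then consider the first index $t$ with $r_t\neq h_{t-1}$. Try to reorder the flights after $t-1$ so that a flight leaving $h_{t-1}$ comes next, and show this reordering preserves all routed demands.
\end{itemize}

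I expect the reordering step to be the main obstacle. It is not obvious that such a reordering exists: information that arrived at $h_{t-1}$ may only be needed much later, while a flight $f_t$ elsewhere may carry independent information. A cleaner alternative is an Eulerian-style argument. Weak connectivity of $G$, together with the fact that every source must reach its destinations, suggests the flight multigraph has a structure admitting a covering walk. One would show that an optimal solution's pigeons can be re-arranged into a single trail that visits each needed transfer in a compatible order, carrying the accumulated payload along.

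If neither reordering nor an Euler trail works directly, I would fall back on a counting argument. Build the walk greedily: the current pigeon always carries everything it has collected. Then charge each flight of the walk to a distinct pigeon of the optimal solution by following demand paths, showing that every ``jump'' needed in the original solution is replaced by a flight the original solution also had to pay for. That charging, which must be injective, is the delicate part.
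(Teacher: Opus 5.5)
\textbf{There is a genuine gap in the gluing step.} Your sequentialization step is fine and matches the paper. The gluing step, however, is the entire content of the lemma, and you do not carry it out. Your two main routes also cannot work as you describe them.

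\emph{Why reordering and Euler trails fail.} Reordering and an Euler-trail argument both keep the multiset of flights of the given optimal solution and only change their order. But the flights of an optimal solution need not form a walk in any order. The optimal solution of Figure~\ref{fig:bipartite-to-solution-existing-nodes} is a counterexample: $s_3\to s_1$, $s_2\to s_1$, $s_1\to d_1$, $s_1\to d_2$, $s_1\to d_3$. Here $s_2$ and $s_3$ have no incoming flights, so both would have to start the walk. The lemma only claims that \emph{some} optimal solution is a walk, for instance $s_3\to s_2\to s_1\to d_1\to d_2\to d_3$. Reaching it requires changing the remote nodes of pigeons, not just their flight times. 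A forward greedy walk also fails. If you process $s_2\to s_1$ and then $s_3\to s_1$, the second flight cannot simply be redirected to start at $s_1$, because $s_1$ does not know $s_3$'s information. Instead, the earlier flight $s_2\to s_1$ must be postponed behind a new flight $s_3\to s_2$.

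\emph{The missing idea.} What makes your ``injective charging'' concrete is the paper's inductive replacement. Process the flights $e_1,\dots,e_k$ of $OPT$ in order. Maintain a sequence $\sigma_i$ of exactly $i$ flights consisting of one contiguous walk per weakly connected component of its infrastructure graph. Keep the invariant that every time-respecting route of $OPT_i$ also exists in $\sigma_i$.
\begin{enumerate}
\item The last node $c$ of a component's walk has received the information of every node in that component. Hence $c$ knows at least everything $a_i$ knows in $OPT_{i-1}$.
\item If $e_i=(a_i,b_i)$ stays inside one component, replace it by $(c,b_i)$ appended to that walk.
\item If $e_i$ joins two components, splice them as $\pi_a\circ((c,d))\circ\pi_b$. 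Here $d$ is the first node of $b_i$'s walk $\pi_b$, with $c=a_i$ or $d=b_i$ if the respective component has no flights yet. Since $b_i$ lies on $\pi_b$, it receives everything $c$ knows.
\end{enumerate}
Each flight of $OPT$ is replaced by exactly one flight, so the pigeon count is preserved. The splice moves all of $\pi_b$ after $\pi_a$, so you must keep several partial walks and merge them rather than build one walk forward in time. Finally, weak connectivity of $G$ ensures that the final $\sigma_k$ is a single walk. Optimality is also needed here, since it rules out infrastructure components that carry no demand.
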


\begin{proof}
    Let~$OPT$ be an optimal solution.
    We assume without loss of generality that the pigeon flights in~$OPT$ are strictly ordered. This is possible, as any two flights that happen at the same time cannot conflict each other and thus can be ordered arbitrarily.
    For simplicity, let $e_i=(a_i,b_i)$ denote the edge in the infrastructure graph that corresponds to the $i$-th pigeon flight in~$OPT$ and let~$OPT_i$ denote the partial solution only containing the first~$i$ flights.

    For each~$0 \leq i \leq |OPT|$, we build a sequence~$\sigma_i$ of pigeon flights satisfying the following two conditions.
    First, all pigeon flights in a weakly connected component of the infrastructure graph corresponding to~$\sigma_i$ occur consecutive and the starting node of each pigeon is the destination of the previous pigeon unless it is the first flight in a weakly connected component.
    Second, for each pair~$(u,v)$ of nodes (not necessarily terminals), if a pigeon route from~$u$ to~$v$ exists in~$OPT_i$, then it also exists in~$\sigma_i$, that is, each node has at least as much information after pigeons flew according to~$\sigma_i$ as for~$OPT_i$.
    Note that if we succeed with this construction for all~$i$, then the final sequence~$\sigma_{|OPT|}$ is an optimal solution satisfying the requirements of the lemma.

    Let~$\sigma_0$ be the empty sequence.
    Note that $\sigma_0$ fulfills our requirements.
    Now assume that~$\sigma_{i-1}$ fulfills our two requirements.
    We will show how to construct~$\sigma_{i}$.
    Consider the $i$-th pigeon flight~$e_{i}=(a_i,b_i)$.
    We consider two cases: either both~$a_i$ and~$b_i$ belong to the same weakly connected component of the infrastructure graph corresponding to~$\sigma_{i-1}$ or not.
    If both belong to the same weakly connected component, then consider the subsequence of~$\sigma_{i-1}$ of all flights in this component and let~$c$ be the destination of the last edge in this sequence.
    Now insert the flight~$(c,b_i)$ to the end of the subsequence (and shift all flights that appear later in~$\sigma_{i-1}$ one time slot back).
    Note that the length of the sequence increased by exactly one.
    Moreover, the first requirement is met since~$e_i$ only contains vertices from one weakly connected component and the new flight starts at the last node of the previous subsequence.
    For the second requirement, note that all nodes who have a pigeon tour to~$a_i$ in~$\sigma_{i-1}$ also have a pigeon route to~$c$.
    Thus, the pigeon flight~$(c,b_i)$ ensures that node~$b_i$ learns at least as much information in~$\sigma_i$ as it does in~$OPT_i$.

    Now consider the case where~$e_i = (a_i,b_i)$ connects two weakly connected components of the infrastructure graph corresponding to~$\sigma_{i-1}$.
    We build~$\sigma_i$ as follows.
    First, let~$\pi_a$ and~$\pi_b$ be the subsequences of~$\sigma_{i-1}$ of all edges in the weakly connected components containing~$a_i$ and~$b_i$, respectively.
    Let~$c$ be the destination of the last edge in~$\pi_a$ and let~$d$ be the first starting point of an edge in~$\pi_b$.
    Now, we remove both~$\pi_a$ and~$\pi_b$ from~$\sigma_{i-1}$ and instead add the sequence~$\pi_a \circ ((c,d)) \circ \pi_b$ at the end.
    Note that~$\sigma_i$ is longer than~$\sigma_{i-1}$ by exactly one and thus has size~$i=|OPT_i|$.
    The first requirement is met by construction and it remains to show that the second requirement is also met.
    To this end, note that~$\pi_a$ and~$\pi_b$ are disjoint and hence no node in either weakly connected component has received any message from a node in the other component.
    Adding~$e_i$ in~$OPT$ hence only gives~$b_i$ information about all nodes that~$a_i$ knows about.
    Note that~$c$ has knowledge about all nodes appearing in~$\pi_a$ in~$\sigma_{i-1}$ and therefore also in~$\tau_a$.
    Moreover, there is a pigeon flight from~$d$ to~$b_i$ in~$\tau_b$ and hence~$b_i$ has full knowledge about all nodes in the component of~$a_i$ in~$\sigma_i$.
    Thus, the second requirement is met and this concludes the proof.
\end{proof}

Using these lemmas, we can now show the hardness of the \mhp{} problem. Note that this result is incomparable to Theorem~\ref{thm:red-2spanner-to-thp} and neither result immediately implies the other.

\begin{theorem}
    \label{thm:multihard}
    The decision variant of \mhp{} is NP-hard.
\end{theorem}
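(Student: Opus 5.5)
By Lemma~\ref{lem:walk}, I would reduce from \textsc{Directed Feedback Vertex Set} (DFVS), which is NP-hard (Karp): given a digraph $H$ and an integer $t$, decide whether there is a set $X$ of at most $t$ vertices such that $H-X$ is acyclic. The idea is that, on weakly connected demand graphs, an optimal \mhp{} solution is exactly a walk, and the cost of a walk is governed by how many nodes must be visited twice. Lemma~\ref{lem:disconnect} is not needed in this plan.

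\textbf{Step 1: restrict to weakly connected instances.} First I would ensure that the DFVS instance is weakly connected. Add a new vertex $z$ with an arc from $z$ to every vertex of $H$. Since $z$ has no incoming arcs, it lies on no cycle, so the optimum feedback vertex set is unchanged. Call the resulting graph $G$, with $n$ nodes, and use it directly as the demand graph. Every node of $G$ is incident to a demand edge.

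\textbf{Step 2: reformulate \mhp{} on $G$ as a sequence problem.} By Lemma~\ref{lem:walk}, some optimal solution is a walk $w_0,w_1,\dots,w_k$ using $k$ pigeons, with consecutive nodes distinct. Information from $u$ reaches $v$ exactly when some occurrence of $u$ precedes some occurrence of $v$. So a walk serves all demands if and only if, for every demand edge $(u,v)$, the first occurrence of $u$ comes before the last occurrence of $v$.

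\textbf{Step 3: prove the key equivalence.} I would show that the optimal number of pigeons is $n-1+\tau(G)$, where $\tau(G)$ denotes the minimum feedback vertex set size. The reduction then sets $k=n-1+t$.

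For the upper bound, take a feedback vertex set $X$. Form the walk that lists $X$ in any order, then $V\setminus X$ in a topological order of $G-X$, then $X$ again. This walk has $n+|X|$ positions and hence $n+|X|-1$ pigeons. Consecutive nodes are distinct, because each block lists distinct nodes and adjacent blocks are disjoint. Every demand edge $(u,v)$ is served:
\begin{itemize}
\item if $u\in X$, then $u$ already occurs in the first block, before any later copy of $v$;
\item if $v\in X$, then $v$ occurs in the final block, after any occurrence of $u$;
\item if neither lies in $X$, the topological order places $u$ before $v$.
\end{itemize}

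For the lower bound, take any feasible walk. Every node occurs at least once, since every node is incident to some demand. Let $X$ be the set of nodes occurring at least twice, so the walk has at least $n+|X|$ positions. Suppose some cycle of $G$ used only nodes occurring exactly once. For each arc $(u,v)$ of that cycle, the single position of $u$ would have to be strictly smaller than the single position of $v$. Going around the cycle then gives a strictly increasing cyclic chain of positions, which is a contradiction. Hence $X$ is a feedback vertex set, and the walk uses at least $n-1+\tau(G)$ pigeons.

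\textbf{Main obstacle.} I expect the delicate point to be the justification that Lemma~\ref{lem:walk} really allows restricting to a single walk. In particular, I need to check that "information reaches $v$" in the walk model coincides exactly with the first-before-last condition above. Once that is settled, the two inequalities are routine and the reduction is clearly polynomial.
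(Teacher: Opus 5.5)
Your proof is correct, and it is the paper's argument in a more general form. The paper reduces from \textsc{Vertex Cover} by bidirecting every edge, so it works with symmetric digraphs, where feedback vertex sets and vertex covers coincide.

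The pieces correspond directly:
\begin{itemize}
\item The paper's forward walk $\pi\circ\tau\circ\pi$ is your three-block walk. Its middle block needs no topological order because $V\setminus K$ is independent.
\item The paper's backward argument is your cycle argument restricted to $2$-cycles: if both endpoints of an edge are visited once, one of the two opposite demands fails.
\item The paper's set $K$ (nodes holding two pigeons, plus the final node) contains your $X$.
\item Both proofs use Lemma~\ref{lem:walk} in the same way to pass to a single walk. So the obstacle you flag needs nothing beyond what the paper itself assumes, and your first-before-last criterion is exactly how information propagates along a walk.
\end{itemize}

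Your version buys more. It gives the exact formula $\mathrm{OPT}=n-1+\tau(G)$ for every weakly connected demand graph. For example, acyclic demand needs exactly $n-1$ pigeons and a directed cycle exactly $n$, matching the paper's Figure~1 and its tightness example. The source $z$ also handles connectivity cleanly, instead of assuming a connected \textsc{Vertex Cover} instance. The paper's version has a marginally simpler verification and gives hardness already for symmetric demand matrices.

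Two small technicalities. First, demand graphs are loopless, so either start from loopless DFVS, or delete looped vertices of $H$ and decrease $t$. Loopless DFVS is still NP-hard, for instance via the same bidirection of \textsc{Vertex Cover}. Second, in the upper bound choose $X$ with $z\notin X$ (for example, $X$ minimal). Then $V\setminus X\neq\emptyset$, so adjacent blocks really are disjoint and consecutive walk nodes are distinct.
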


\begin{proof}
    We reduce from \textsc{Vertex Cover}.
    To this end, let~$(G=(V,E),k)$ be an instance of \textsc{Vertex Cover} and assume without loss of generality, that~$G$ is connected and contains at least one edge.
    Note that this implies that each node in~$V$ is incident to at least one edge in~$E$.
    We will construct an equivalent instance~$(G'=(V,E'),k')$ of \mhp{} on the same node set~$V$ as follows.
    For each edge~$\{u,v\} \in E$, we add  the two edges~$(u,v)$ and~$(v,u)$ to~$E'$.
    To conclude the construction, we set~$k'=n+k-1$, where~$n=|V|$.
    See Figure~\ref{fig:exampleVC} for an example of the above construction.

    \begin{figure}[t]
        \centering
        \resizebox{\columnwidth}{!}{%
        \begin{tikzpicture}[> = stealth]
            \node[circle,draw,inner sep=3pt,label=$u$] at(0,0) (u) {};
            \node[circle,draw,inner sep=3pt,label=$v$] at(1,1) (v) {};
            \node[circle,draw,inner sep=3pt,label=$w$] at(-1,1) (w) {};
            \node[circle,draw,inner sep=3pt,label=right:$x$] at(0,-1) (x) {};
            
            \node[circle,draw,inner sep=3pt,label=$u$] at(6,0) (u2) {};
            \node[circle,draw,inner sep=3pt,label=$v$] at(7,1) (v2) {};
            \node[circle,draw,inner sep=3pt,label=$w$] at(5,1) (w2) {};
            \node[circle,draw,inner sep=3pt,label=right:$x$] at(6,-1) (x2) {};

            \node at(-1.5,-.25) {$k=2$};
            \node at(7.5,-.25) {$k=5$};

            \draw (u) -- (v);
        \draw (v) -- (w);
        \draw (u) -- (w);
        \draw (u) -- (x);

        \draw[->,bend left=15] (u2) to  node[pos=0.5] {\includegraphics[width=4mm]{pigeon.png}} (v2);
        \draw[->,bend left=15] (v2) to  node[pos=0.5] {\includegraphics[width=4mm]{pigeon.png}} (u2);

        \draw[->,bend left=15] (v2) to  node[pos=0.5] {\includegraphics[width=4mm]{pigeon.png}} (w2);
        \draw[->,bend left=15] (w2) to  node[pos=0.5] {\includegraphics[width=4mm]{pigeon.png}} (v2);

        \draw[->,bend left=15] (u2) to  node[pos=0.4] {\includegraphics[width=4mm]{pigeon.png}} (w2);
        \draw[->,bend left=15] (w2) to  node[pos=0.4] {\includegraphics[width=4mm]{pigeon.png}} (u2);

        \draw[->,bend left=15] (u2) to  node[pos=0.6] {\includegraphics[width=4mm]{pigeon.png}} (x2);
        \draw[->,bend left=15] (x2) to  node[pos=0.7] {\includegraphics[width=4mm]{pigeon.png}} (u2);
        \end{tikzpicture}
        }
        \caption{An example instance of \textsc{Vertex Cover} on the left and the corresponding equivalent instance of \mhp{} on the right.}
        \label{fig:exampleVC}
    \end{figure}

    Since the construction can clearly be computed in polynomial time, it only remains to prove that the constructed instance of \mhp{} is a yes-instance if and only if the original instance of \textsc{Vertex Cover} is a yes-instance.
    For this, first assume that the original instance of \textsc{Vertex Cover} is a yes-instance and let~$K$ be a vertex cover of size at most~$k$ in~$G$.
    Let~$s \in K$ be an arbitrary node 
    and let~$\pi = (u_1,u_2,\ldots,u_{|K|})$ 
 be an arbitrary ordering of the nodes in~$K$ where~$s$ is the last node and let~$\tau = (v_1,v_2,\ldots,v_{n-|K|})$ be an arbitrary ordering of the nodes in~$V \setminus K$.

    We now construct a solution for the constructed instance of \mhp.
    First, we place one pigeon in each node~$v_i$ whose home is node~$v_{i+1}$ for all~$i < n-|K|$.
    We also place one pigeon in~$s = u_{|K|}$ whose home is~$v_1$ and one pigeon in~$v_{n-|K|}$ whose home is~$u_1$.
    Finally, we place two pigeons in each node~$u_i$ with~$1 \leq i < |K|$ whose home is~$u_{i+1}$.
    Note that we placed exactly~${n+|K|-1 \leq n+k-1 = k'}$ pigeons.
    Next, we iteratively let one pigeon fly at a time as follows.
    We start with a pigeon flying from~$u_1$ to~$u_2$.
    Then a pigeon flying from~$u_2$ to~$u_3$ and so on.
    Once a pigeon arrives at~$u_{|K|}=s$, the next pigeon flies from~$s$ to~$v_1$.
    Afterwards, a pigeon flies from~$v_1$ to~$v_2$, then from~$v_2$ to~$v_3$ and so on until a pigeon arrives at~$v_{n-|K|}$.
    Now, the pigeon starting in~$v_{n-|K|}$ flies to~$u_1$.
    From now on, the second pigeon in each node~$u_i$ iteratively flies to~$u_{i+1}$ for each~$i < |K|$.

    We will next show that each demand is satisfied by the above construction.
    To this end, we show that for each demand~$(u,v)$, there is a subsequence in~$\pi \circ \tau \circ \pi$ which starts in~$u$ and ends in~$v$, where~$\circ$ is the concatenation operation.
    By construction, any subsequence corresponds to a sequence of pigeon flights that start and end at the respective nodes and are performed in the order given by the sequence.
    Thus, the information starting in node~$u$ is moved via pigeons iteratively and finally reaches~$v$.
    
    Note that each demand~$(u,v)$ implies that there is an edge~$\{u,v\}$ in~$E$ and hence at least one of the two nodes is contained in~$K$.
    If~$u$ is contained in~$K$, then the sought-after subsequence starts in the first occurrence of~$\pi$ and ends in either~$\tau$ or the second occurrence of~$\pi$ depending on whether~$v$ is contained in~$V \setminus K$ or in~$K$.
    If only~$v$ is contained in~$K$, then the subsequence starts in~$\tau$ and ends in the second occurrence of~$\pi$.
    Since each node in~$K$ is contained in~$\pi$ and each node in~$V \setminus K$ is contained in~$\tau$ by definition, the subsequences exist and we have successfully constructed a solution.

    In the other direction, assume that the constructed instance of \mhp{} is a yes-instance.
    By Lemma~\ref{lem:walk}, we may assume that a solution describes a walk through~$G'$ as~$G'$ is connected since~$G$ is connected.
    Let~$K$ be the set that includes all nodes in which at least two pigeons are initially placed in a solution as well as the the destination of the last pigeon flight.  
    We will show that~$K$ is a vertex cover of size at most~$k$. 
    First, note that since each node in~$V$ is incident to at least one edge in~$E$, it also holds for each node~$u$ that there is a demand of the form~$(u,v) \in E'$.
    Hence, at least one pigeon needs to start in each node as otherwise this demand cannot be satisfied.
    Since the number of pigeons is at most~$k' = n+k-1$, it holds that~$|K| \leq k-1+1$ 
    (the last $+1$ comes from the last destination added to~$K$).
    Assume towards a contradiction that~$K$ is not a vertex cover.
    Then, there is an edge~$(u,v)$ such that neither~$ u$ nor~$ v$ is contained in~$K$.
    Then, both of~$u$ and~$v$ appear once in the walk described by the assumed solution, as each node appearing twice either ends the walk or has two outgoing edges, which corresponds to two pigeons starting there initially.

    Since both~$ u$ and~$ v$ appear once, one of the two occurs earlier than the other.
    Let us assume without loss of generality that~$u$ appears before~$v$.
    Then, the constructed demand~$(v,u)$ (recall that~$\{u,v\} \in E$ and hence~$(u,v) \in E'$ and~$(v,u) \in E'$) cannot be satisfied as whenever a pigeon leaves from~$v$, no pigeon arrives at~$u$ to deliver the message, a contradiction to the assumption that we started with a solution.
    Thus, $K$ is a vertex cover of size at most~$k$ and the original instance of \textsc{Vertex Cover} is a yes-instance.
    This concludes the proof.
\end{proof}

In the following theorem, we present an ILP formulation of the \mhp{} problem.

\begin{theorem}
    \label{thm:ILPmp}
    There is an ILP with~$O(n^4)$ variables and constraints for \mhp, where all variables are binary.
\end{theorem}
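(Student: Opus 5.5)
We follow the template of Theorem~\ref{thm:ILPtp}, but now track how information spreads over time instead of fixing two-hop paths. By Lemma~\ref{lem:numberhops} we may assume at most $T=2n-2$ pigeons, and, as in the proof of Theorem~\ref{thm:ILPtp}, that they fly one at a time in distinct time steps $1,\dots,T$.

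\emph{Variables.} For each ordered pair $u,v\in V$ and each $i\in[T]$ we use a binary variable $x_{u,v}^i$, meaning that the $i$\textsuperscript{th} pigeon flies from $u$ to $v$. This gives $O(n^3)$ variables. For each source $s\in V$, node $w\in V$ and time $i\in\{0,\dots,T\}$ we use a binary ``knowledge'' variable $z_{s,w}^i$, meaning that after the first $i$ flights the information of $s$ is available at $w$. To linearize the product of ``$w$ knows $s$'' and ``pigeon $i$ flies $w\to v$'', we add auxiliary binaries $f_{s,w,v}^i$ for all $s,w,v\in V$ and $i\in[T]$. These account for the $O(n^4)$ variables. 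The objective is $\sum_{u,v,i}x_{u,v}^i$.

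\emph{Constraints.}
\begin{itemize}
\item $\sum_{u,v}x_{u,v}^i\le 1$ for every $i$.
\item Initialization: $z_{s,s}^0=1$ and $z_{s,w}^0=0$ for $w\ne s$.
\item Propagation upper bound: $z_{s,v}^i\le z_{s,v}^{i-1}+\sum_{w}f_{s,w,v}^i$.
\item Linking: $f_{s,w,v}^i\le z_{s,w}^{i-1}$ and $f_{s,w,v}^i\le x_{w,v}^i$.
\item Demand: $z_{u,v}^T\ge 1$ for every $(u,v)\in E$.
\end{itemize}
The propagation constraint, written once per triple $(s,v,i)$, gives $O(n^3)$ constraints. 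The linking constraints, one per $(s,w,v,i)$, give $O(n^4)$ constraints. The remaining families are smaller, so the totals match the statement. We only need upper bounds on $z$: the ILP may leave a $z$ at $0$ even when information has arrived, and this does no harm because we only require $z=1$ at the end.

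\emph{Completeness.} Take a solution with $k$ pigeons, ordered by flight time and padded with idle steps up to $T$. Set $x$ from the schedule. Set $z_{s,w}^i=1$ exactly when there is a time-respecting pigeon route from $s$ to $w$ using flights $\le i$. Set $f_{s,w,v}^i=z_{s,w}^{i-1}x_{w,v}^i$. All constraints then hold, by induction on $i$ for the propagation inequality. The demand constraints hold because every demand is routed in the solution, and the objective equals $k$.

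\emph{Soundness.} Take an ILP solution of value at most $k$ and fly a pigeon $u\to v$ at step $i$ whenever $x_{u,v}^i=1$. We prove by induction on $i$ that $z_{s,w}^i=1$ implies there is a time-respecting route from $s$ to $w$ within the first $i$ flights. If $z_{s,v}^i=1$, then either $z_{s,v}^{i-1}=1$, and we use the induction hypothesis; or some $f_{s,w,v}^i=1$. In the second case $z_{s,w}^{i-1}=1$ and $x_{w,v}^i=1$, so the route to $w$ can be extended by flight $i$. The demand constraints then give delivery of every demand.

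The main obstacle is this soundness induction. Its role is to make sure the linearization via $f$ cannot create information that was never transported, i.e.\ that knowledge cannot appear ``for free''. A secondary point is that Lemma~\ref{lem:numberhops} bounds the horizon by $T$. This is what keeps the variable count polynomial, and the padding with idle steps in the completeness direction makes a fixed horizon of $T$ steps harmless.
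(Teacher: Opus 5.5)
Your ILP is correct, but it takes a different route from the paper's proof. The paper first reduces to a weakly connected demand graph via Lemma~\ref{lem:disconnect}. It then invokes Lemma~\ref{lem:walk} together with Lemma~\ref{lem:numberhops} to assume that an optimal solution is a single walk with at most $2n-1$ flights. This lets it encode a solution purely by which node sits at each of $2n$ walk positions: binary $x_v^i$, only $O(n^2)$ variables. For each demand $(u,v)$ it adds a choice of positions $i<j$ with $u$ at $i$ and $v$ at $j$: binary $y_{u,v}^{i,j}$, which gives the $O(n^4)$ count. The objective is the number of occupied positions minus one.

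Correctness there is almost immediate. Along a walk, information at any position reaches every later position, so ``$u$ occurs before $v$'' is exactly the serving condition. There is no need to track who knows what.

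Your time-expanded formulation instead models an arbitrary serialized schedule through flight variables, reachability variables $z_{s,w}^i$, and the linearization $f_{s,w,v}^i$. You certify delivery by induction over time. It therefore uses only flight serialization and the horizon bound of Lemma~\ref{lem:numberhops}. It bypasses Lemma~\ref{lem:walk} and Lemma~\ref{lem:disconnect} entirely, and it handles disconnected demand graphs in one ILP.

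What each approach buys:
\begin{itemize}
\item The paper's route gives a leaner, more structured ILP (node-position variables instead of flight and knowledge variables) with a two-line correctness check. The price is that it rests on the nontrivial walk lemma.
\item Your route is self-contained and more robust. The same propagation scheme would survive in variants where the single-walk structure is unavailable.
\end{itemize}

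One small point of care: in the completeness direction, apply your construction to an optimal solution, which has at most $T$ pigeons by Lemma~\ref{lem:numberhops}, rather than to an arbitrary $k$-pigeon solution. A schedule with more than $T$ flights cannot be written into a horizon of $T$ steps. Your opening ``we may assume'' covers this, but the completeness paragraph should say so explicitly.
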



\begin{proof}
    Let~$(G=(V,E),k)$ be an instance of \mhp.
    By Lemma~\ref{lem:disconnect}, we may assume that~$G$ is connected as otherwise, we can solve each connected component independently.
    We construct an ILP with a binary variable~$x_v^i$ for each~$v \in V$ and each~$i \in [2n]$ and a binary variable~$y_{u,v}^{i,j}$ for each edge~$(u,v) \in A$ and each pair~$i,j \in [2n]$ with~$i < j$.
    The goal is to minimize~$\sum_{v \in V} \sum_{i \in [2n]}x_v^i$ and a solution with~$k$ pigeons will exist if and only if there is a solution to the ILP where the goal value is at most~$k+1$.
    The constraints are listed in Figure~\ref{fig:ILPmulti}.
    \begin{figure}[t]
    \begin{align}
         \sum_{v \in V} x_v^i &\leq 1 & \forall i\in [2n] \label{req:1}\\
         \sum_{i < j \in [2n]} y_{u,v}^{i,j} &= 1 & \forall (u,v)\in E \label{req:2}\\
         2y_{u,v}^{i,j} &\leq x_u^i + x_v^j & \forall (u,v) \in E \text{ and } i < j \in [2n]\label{req:3}
    \end{align}
    \caption{The constraints of the ILP for \mhp.}
    \label{fig:ILPmulti}
    \end{figure}

    Since the number of variables and constraints is clearly both in~$O(n^4)$, it remains to show that the ILP is correct.
    To this end, we use Lemma~\ref{lem:walk} and Lemma~\ref{lem:numberhops}.
    We may assume that a solution is described by a walk of length at most~$2n-1$.
    We simply represent this walk by its at most~$2n$ nodes.
    We set~$x_v^i = 1$ if and only if~$v$ appears in position~$i$ in this sequence.
    Note that Constraint~\ref{req:1} ensures that at most one node appears in each position, and the goal function describes the total number of nodes appearing in the sequence.
    This number minus one is then the required number of pigeons as we claimed.

    So it only remains to show that each demand is satisfied by a solution to the ILP.
    To this end, we use variable~$y_{u,v}^{i,j}$ to denote that demand~$(u,v)$ is picked up at time step~$i$ and delivered in time step~$j$.
    Constraint~\ref{req:2} ensures that each demand is satisfied in this way, and Constraint~\ref{req:3} ensures that the solution walk is at node~$u$ at time~$i$ and at node~$v$ at time~$j$.
    Note that whenever~$y_{u,v}^{i,j}$ is set to~$1$, then in order to satisfy Constraint~\ref{req:3}, both~$x_u^i$ and~$x_v^j$ have to be set to~$1$ as well.
    This concludes the proof.
\end{proof}

\section{Future Work}

We view this paper as an ``Introduction to Pigeon Post Theory'' because it opens the door to many research avenues. First, throughout the paper, we focused on unbounded capacity pigeons, and the algorithmic and hardness results were derived under this assumption. An immediate question is how these results change when pigeons have bounded capacities, that is, when each pigeon can carry only a limited amount of information. For the special case of unit capacities, our hardness results for the 2-hop and multihop models seem to remain similar. Whether tight approximation results persist for bounded capacities, or whether new phenomena arise, remains an interesting open problem.
Second, while the present work focuses on minimizing the number of pigeons, other cost measures are equally natural. For instance, we could consider every pigeon hop as a time step and try to reduce the time needed to deliver all messages. 
Additionally, one may consider the cost of transporting pigeons to their remote nodes, possibly allowing batching of deliveries to multiple nodes.
Studying such cost-aware variants would further connect the model to classical network design and facility-location problems.
Another promising direction concerns the dynamic demand. In this work, demands are assumed to be known in advance. Instead it would be interesting to study settings in which demands arrive over time, either according to a known process, stochastically, or adversarially. This naturally leads to online and competitive variants of the problem.
Finally, it would be interesting to investigate distributed algorithms in the pigeon model. Since sending a message consumes a pigeon and effectively removes a directed edge from the infrastructure, one may ask how to perform distributed computation while minimizing pigeon usage, or how to compute without disconnecting the network.

\begin{acks}
    This work was supported by the German Research Foundation (DFG), SPP 2378 (ReNO-2), 2025-2029.
\end{acks}

\balance
\bibliographystyle{ACM-Reference-Format}
\bibliography{bib}
\end{document}